\documentclass[a4paper,11pt,oneside]{article}

% kódování
\usepackage[english]{babel}
\usepackage[T1]{fontenc}
\usepackage[utf8]{inputenc}

% metadata
\usepackage[pdftex,unicode]{hyperref}
\hypersetup{pdftitle=Estimation of Ornstein-Uhlenbeck Process Using Ultra-High-Frequency Data with Application to Intraday Pairs Trading Strategy}
\hypersetup{pdfauthor=Vladimír Holý and Petra Tomanová}

% okraje
\usepackage[margin=60pt]{geometry}
\setlength{\emergencystretch}{4mm}

% matematika
\usepackage{amsmath}
\usepackage{amssymb}
\usepackage{amsthm}
\newtheorem{proposition}{\sc Proposition}

% barvy
\usepackage{color}
\definecolor{mycolor}{rgb}{0.725,0.345,0.208}
\hypersetup{colorlinks=true, linkcolor=mycolor, anchorcolor=mycolor, citecolor=mycolor, filecolor=mycolor, urlcolor=mycolor}

% cisla
\usepackage[group-minimum-digits=3]{siunitx}

% tabulky a grafy
\usepackage{graphicx}
\usepackage{float}
\usepackage{hhline}
\usepackage{multirow}
\usepackage{booktabs}

% citace
\usepackage[authoryear]{natbib}

\begin{document}

\begin{center}
{\Large \bfseries Estimation of Ornstein--Uhlenbeck Process Using Ultra-High-Frequency Data with Application to Intraday Pairs Trading Strategy}
\end{center}

\begin{center}
{\bfseries Vladimír Holý} \\
Prague University of Economics and Business \\
Winston Churchill Square 4, 130 67 Prague 3, Czech Republic \\
\href{mailto:vladimir.holy@vse.cz}{vladimir.holy@vse.cz} \\
\end{center}

\begin{center}
{\bfseries Petra Tomanová} \\
Prague University of Economics and Business \\
Winston Churchill Square 4, 130 67 Prague 3, Czech Republic \\
\href{mailto:petra.tomanova@vse.cz}{petra.tomanova@vse.cz}
\end{center}

\noindent
\textbf{Abstract:}
When stock prices are observed at high frequencies, more information can be utilized in estimation of parameters of the price process. However, high-frequency data are contaminated by the market microstructure noise which causes significant bias in parameter estimation when not taken into account. We propose an estimator of the Ornstein--Uhlenbeck process based on the maximum likelihood which is robust to the noise and utilizes irregularly spaced data. We also show that the Ornstein--Uhlenbeck process contaminated by the independent Gaussian white noise and observed at discrete equidistant times follows an ARMA(1,1) process. To illustrate benefits of the proposed noise-robust approach, we introduce a novel intraday pairs trading strategy based on the mean-variance optimization. In an empirical study of 7 Big Oil companies, we show that the use of the proposed estimator of the Ornstein--Uhlenbeck process leads to an increase in profitability of the pairs trading strategy.
\\

\noindent
\textbf{Keywords:}
Ornstein--Uhlenbeck Process, High-Frequency Data, Market Microstructure Noise, Pairs Trading.
\\

\noindent
\textbf{JEL Codes:}
C22, C58, G11.
\\

\section{Introduction}
\label{sec:intro}

% ORNSTEIN-UHLENBECK
In finance, many different time series tend to move to their mean values over time. This behavior is known as the mean reversion and is often captured by the Ornstein--Uhlenbeck process \citep{Uhlenbeck1930}. It can be used to model currency exchange rates \citep{Ball1994} and commodity prices \citep{Schwartz1997}. A major application of the Ornstein--Uhlenbeck process is the modeling of interest rates by the so-called Vasicek model \citep{Vasicek1977}. The Ornstein--Uhlenbeck process can also be utilized to model stochastic volatility of financial assets \citep{Barndorff-Nielsen2001}. Another application is the trading strategy called the pairs trading \citep{Elliott2005}.

% HIGH-FREQUENCY DATA
The Ornstein--Uhlenbeck process can be utilized when analyzing financial high-frequency data. \cite{Engle2000} coined a term ultra-high-frequency data referring to irregularly spaced financial data recorded for each transaction. Although this data can be aggregated to some fixed frequency (e.g.\  one minute), it is not without a loss of information. \cite{Ait-Sahalia2005} suggest to use as many observations as possible. In general, high-frequency time series, whether irregularly or regularly spaced, exhibit specific characteristics such as heavy tails, jumps, long-term memory, and market microstructure noise. In the Ornstein--Uhlenbeck model, heavy tails and jumps are often captured by generalizing the background driving process to the L\'{e}vy process \citep{Barndorff-Nielsen2001}. Alternatively, the background driving process can be generalized to the fractional Brownian motion to capture long-term memory \citep{Cheridito2003}.

% MICROSTRUCTURE NOISE
We focus on challanges surrounding the market microstructure noise. Generally, logarithmic prices are assumed to follow a semimartingale \citep{Delbaen1994}. However, when the prices are observed at higher frequencies, it is evident that the semimartingale is contaminated by the market microstructure noise. This noise has relatively small variance but makes standard measures of volatility such as the realized variance significantly biased. Causes of the market microstructure noise include the bid-ask bounce, discretness of price values, discretness of price changes, informational effects and recording errors. Generally, the noise has a rich structure such as dependency on the price process and dependency in time (see e.g.\ \citealp{Hansen2006}). Many methods estimating quadratic variation and integrated variance robust to the noise were proposed in the high-frequency literature. Among nonparametric methods belong the two-scale estimator of \cite{Zhang2005}, the realized kernel estimator of \cite{Barndorff-Nielsen2008}, and the pre-averaging estimator of \cite{Jacod2009}. See e.g.\ \cite{Holy2023a} for their comparison. While a nonparametric approach is dominant in the literature, a parametric method was used for estimation of the Wiener process parameters by \cite{Ait-Sahalia2005}.

% OUR ESTIMATOR
In this paper, we estimate the parameters of the Gaussian Ornstein--Uhlenbeck process in the presence of the independent Gaussian noise. We show that the Ornstein--Uhlenbeck parameters estimated by methods ignoring the noise are biased and inconsistent. As we argue, this is caused by the fact that the Ornstein--Uhlenbeck process contaminated by the independent Gaussian white noise and observed at discrete equidistant times follows ARMA(1,1) process instead of AR(1) process. We make use of this finding and propose a noise-robust estimator based on the ARMA(1,1) reparametrization. We also deal with the situation when the observations are not equidistant and propose a noise-robust estimator based on the maximum likelihood. For intial estimates, we utilize the method of moments with a noise-robust specification.

% PAIRS TRADING
As an application of the Ornstein--Uhlenbeck process, we analyze the pairs trading strategy. This allows us to evaluate the added value of the noise-robust estimators compared to the traditional noise-sensitive estimators in terms of profit. The idea behind pairs trading lies in taking an advantage of financial markets that are out of equilibrium. When some pairs of prices exhibit strong similarity in the long run and they are currently far enough from their equilibrium, traders might profit by taking a long position in one security and a short position in the other security in a predetermined ratio. When the price spread reverts back to its mean level, the positions are closed and the profit is made. Typically, two similar commodities (e.g.\  West Texas Intermediate crude oil and Brent crude oil) or two stocks of companies in the same industry (e.g.\  Coca-Cola company and Pepsi company) are traded. The pairs trading can be further generalized to trading of groups of securities. There are three commonly used approaches in pairs trading: the distance approach \citep{Gatev2006a, Bowen2010, Rinne2017}, the cointegration approach \citep{Vidyamurthy2004a, Peters2011, Miao2014}, and the stochastic spread approach \citep{Elliott2005, Cummins2012, Goncu2016}. The focus of the stochastic spread approach is more on the time series analysis of a given pair of securities rather than the selection of securities. Typically, the spread process is modeled by a mean-reverting autoregressive process in discrete time or the Ornstein--Uhlenbeck process in continuous time. Entry and exit signals are then generated in the optimal way. For a comprehensive review of the pairs trading literature, see \cite{Krauss2017}. 

% INTRADAY STRATEGY
Some studies focus on intraday pairs trading. Namely, \cite{Bowen2010} analyze 60-minute data, \cite{Dunis2000} 30-minute data, \cite{Miao2014} 15-minute data, \cite{Peters2011} 10-minute data, and \cite{Liu2017} 5-minute data. However, none of these studies utilizes ultra-high-frequency data. Our aim is therefore to bring an insight into the pairs trading strategy in the context of ultra-high-frequency data.

% OUR EMPIRICAL STUDY
We follow the stochastic spread approach based on the Ornstein--Uhlenbeck process. As in \cite{Bertram2009, Bertram2010}, we find the optimal trading signals using first-passage times of the process. While \cite{Bertram2009, Bertram2010} optimizes the strategy in terms of the maximum expected return and the maximum Sharpe ratio, we propose a strategy based on the mean-variance optimization. Note that \cite{Holy2022c} follow our work and further investigate properties of the proposed mean-variance optimization. In our study, we analyze stocks of 7 Big Oil companies traded on New York Stock Exchange (NYSE). We demonstrate that even when the variance of the noise is relatively small and one would simply decide to ignore it, which is unfortunatelly quite common in practice, it has a great impact on the estimated parameters. The reliance of market participants on this biased estimates can lead to wrong decisions and have harmful consequences. The pitfall of this lies in the fact that estimated parameters might appear as reliable values at the first sight but they are actually multiple times higher than their true values. We find that the use of the proposed estimator of the Ornstein--Uhlenbeck process with the correct treatment of the market microstructure noise leads to a significant increase in profitability of the pairs trading strategy.

% PAPER STRUCTURE
The paper is structured as follows. In Section \ref{sec:ou}, we outline basic properties of the Ornstein--Uhlenbeck process, propose three noise-robust estimators, and compare them in a simulation study. In Section \ref{sec:pairs}, we compute the first passage times of the Ornstein--Uhlenbeck process and present the pairs trading strategy based on the mean-variance optimization. In Section \ref{sec:oil}, we illustrate the bias of the traditional noise-sensitive estimators and benefits of the proposed noise-robust estimators in an empirical study of 7 Big Oil companies. We conclude the paper in Section \ref{sec:conclusion}.

\section{Estimators of Ornstein--Uhlenbeck Process}
\label{sec:ou}

The Ornstein--Uhlenbeck process $P_t$, $ t \geq 0$ is a process satisfying stochastic differential equation
\begin{equation}
\label{eq:ouDiff}
\mathrm{d} P_t = \tau (\mu - P_t) \mathrm{d} t + \sigma \mathrm{d} W_t,
\end{equation}
where $W_t$ is a Wiener process, $\mu$ is a parameter representing long-term mean, $\tau > 0$ is a parameter representing speed of reversion, and $\sigma > 0$ is a parameter representing instantaneous volatility. This stochastic differential equation has solution
\begin{equation}
\label{eq:ouSolution}
P_t = P_0 e^{-\tau t} + \mu (1 - e^{-\tau t}) + \sigma \int_{0}^{t} e^{-\tau (t - s)} \mathrm{d} W_s.
\end{equation}
When assuming $P_0 \sim \mathrm{N} ( \mu, \sigma^2 / 2 \tau )$ and $P_0 \perp W_t$, $t \geq 0$, the Ornstein--Uhlenbeck process $P_t$ is a stationary process with normally distributed increments and unconditional moments
\begin{equation}
\label{eq:ouMomentsUnc}
\begin{aligned}
\mathrm{E} [ P_t ] &= \mu, \\
\mathrm{var} [ P_t ] &= \frac{\sigma^2}{2 \tau}, \\
\mathrm{cov} [ P_t, P_s ] &= \frac{\sigma^2}{2 \tau} e^{-\tau \lvert t-s \rvert}, \quad t \neq s. \\
\end{aligned}
\end{equation}
For a given initial value $p_0$, the Ornstein--Uhlenbeck process $P_t$ is a nonstationary process with normally distributed increments and conditional moments
\begin{equation}
\label{eq:ouMomentsCon}
\begin{aligned}
\mathrm{E} [ P_t  \mid P_0 = p_0 ] &= p_0 e^{-\tau t} + \mu \left( 1 - e^{-\tau t} \right), \\
\mathrm{var} [ P_t  \mid P_0 = p_0 ] &= \frac{\sigma^2}{2 \tau} \left( 1 - e^{-2 \tau t} \right), \\
\mathrm{cov} [ P_t, P_s \mid P_0 = p_0 ] &= \frac{\sigma^2}{2 \tau} \left( e^{-\tau \lvert t-s \rvert} - e^{-\tau (t+s)} \right), \quad t \neq s. \\
\end{aligned}
\end{equation}

In practice, we do not observe continuous paths of the process. Instead, we only observe the process $P_{T_i}$ at a finite number of discrete times $0 = T_0 < T_1 < \ldots < T_n = 1$, where $T_i$ are times of observations, which we treat as deterministic. Without loss of generality, we restrict ourselves to the time interval $[0,1]$. We further assume that the observed process is contaminated by independent white noise $E_i \sim \textrm{N}(0, \omega^2)$. For the observed discrete process $X_i$, we utilize the additive noise model
\begin{equation}
\label{eq:ouObsProcess}
X_{i} = P_{T_i} + E_{i}, \qquad i=0,\ldots,n.
\end{equation}
When assuming $P_0 \sim \mathrm{N} ( \mu, \sigma^2 / 2 \tau )$ and $P_0$ independent of $W_{T_i}$, $i \geq 0$, the observed process $X_i$ is a stationary process with normally distributed increments and unconditional moments
\begin{equation}
\label{eq:ouObsMomentsUnc}
\begin{aligned}
\mathrm{E} [ X_i ] &= \mu, \\
\mathrm{var} [ X_i ] &= \frac{\sigma^2}{2 \tau} + \omega^2, \\
\mathrm{cov} [ X_i, X_j ] &= \frac{\sigma^2}{2 \tau} e^{-\tau \lvert T_i-T_j \rvert}, \quad i \neq j. \\
\end{aligned}
\end{equation}
For a given $x_0$ the observed process $X_i$ is a nonstationary process with normally distributed increments and conditional moments
\begin{equation}
\label{eq:ouObsMomentsCon}
\begin{aligned}
\mathrm{E} [ X_i \mid X_0 = x_0] &= \mathrm{E}[ P_0\mid X_0 = x_0 ] e^{-\tau T_i} + \mu \left( 1 - e^{-\tau T_i} \right), \\
\mathrm{var} [ X_i \mid X_0 = x_0 ] &= \mathrm{var}[ P_0\mid X_0 = x_0 ] e^{-2 \tau T_i} + \frac{\sigma^2}{2 \tau} \left( 1 - e^{-2 \tau T_i} \right) + \omega^2, \\
\mathrm{cov} [ X_i, X_j \mid X_0 = x_0 ] &= \mathrm{var}[ P_0\mid X_0 = x_0] e^{-\tau (T_i+T_j)} \\
& \quad + \frac{\sigma^2}{2 \tau} \left( e^{-\tau \lvert T_i-T_j \rvert} - e^{-\tau (T_i+T_j)} \right), \quad i \neq j, \\
\end{aligned}
\end{equation}
where
\begin{equation}
\label{eq:cinit}
\begin{aligned}
\mathrm{E} [ P_0 \mid X_0 = x_0 ] &= \frac{x_0 \sigma^2 + 2 \tau \mu \omega^2}{\sigma^2 + 2 \tau \omega^2}, \\
\mathrm{var} [ P_0 \mid X_0 = x_0 ] &= \frac{\sigma^2 \omega^2}{\sigma^2 + 2 \tau \omega^2}. \\ 
\end{aligned}
\end{equation}
This conditional distribution is derived in Appendix \ref{app:conditional}.

Let us analyze the situation in which we assume observations to follow the Ornstein--Uhlenbeck process $P_{T_i}$ but they actually follow the noisy process $X_i$. From \eqref{eq:ouMomentsUnc} and \eqref{eq:ouObsMomentsUnc} we have unconditional moments
\begin{equation}
\label{eq:ouMomentsUncNR}
\begin{aligned}
\mathrm{E} [ X_i ] &= \mathrm{E} [ P_{T_i} ], \\
\mathrm{var} [ X_i ] &= \mathrm{var} [ P_{T_i} ] + \omega^2, \\
\mathrm{cov} [ X_i, X_j ] &= \mathrm{cov} [ P_{T_i}, P_{T_j} ], \quad i \neq j. \\
\end{aligned}
\end{equation}
This means that an unbiased estimate of the expected value of $X_i$ is also an unbiased estimate of the expected value of $P_{T_i}$. The same applies for the autocovariance function of $X_i$ and the autocovariance function of $P_{T_i}$. An unbiased estimate of the variance of $X_i$, on the contrary, is a positively biased estimator of the variance of $P_{T_i}$. Because of this, the autocorrelation function
\begin{equation}
\mathrm{cor} [ X_i, X_j ] = \mathrm{cor} [ P_{T_i}, P_{T_j} ] - \frac{2 \tau \omega^2}{\sigma^2 + 2 \tau \omega^2} e^{- \tau \lvert T_i-T_j \rvert}, \quad i \neq j
\end{equation}
also differes from the autocorrelation function of $P_{T_i}$. To sum up, the misspecification of the process does not affect unconditional expected value and autocovariance estimation, but does affect unconditional variance and autocorrelation estimation. 

Our goal is to estimate the parameters $\mu$, $\tau$, $\sigma$ of the Ornstein--Uhlenbeck process $P_{T_i}$ and the parameter $\omega$ of the market microstructure noise $E_i$ from the observed process $X_{i}$. For this purpose, we propose the method of moments estimator, the maximum likelihood estimator, and the estimator reparametrizing discretized Ornstein--Uhlenbeck process with the noise as an ARMA(1,1) process.

\subsection{Method of Moments}
\label{sec:ouMoments}

The method of moments is based on relating theoretical values of random variable moments to their finite-sample estimates. The advantage of the method of moments lies in its simplicity and closed-form solution. It is often used as an initial solution for more sofisticated methods such as the maximum likelihood estimator.

We propose the method of moments estimator considering the market microstructure noise. In this section, we assume that the times of observations $T_i$ are equally spaced and $T_i - T_{i-1} = n^{-1}$. As we estimate four parameters of the observed process $X_i$, we utilize four unconditional moments $\textrm{E} [ X_i ]$, $\textrm{var} [ X_i ]$, $\textrm{cov} [ X_i, X_{i-1} ]$ and $\textrm{cov} [ X_i, X_{i-2} ]$ specified in \eqref{eq:ouMomentsUncNR}. We can estimate these moments using observed values $x_0, x_1, \ldots, x_n$ as
\begin{equation}
\begin{aligned}
M_{1,n} &= \frac{1}{n+1} \sum_{i=0}^n X_i, \\
M_{2,n} &= \frac{1}{n} \sum_{i=0}^n (x_i - M_{1,n})^2, \\
M_{3,n} &= \frac{1}{n-1} \sum_{i=1}^n (x_i - M_{1,n})(x_{i-1} - M_{1,n}), \\
M_{4,n} &= \frac{1}{n-2} \sum_{i=2}^n (x_i - M_{1,n})(x_{i-2} - M_{1,n}). \\
\end{aligned}
\end{equation}
By solving equations
\begin{equation}
\textrm{E} [ X_i ] = M_{1,n}, \quad \textrm{var} [ X_i ] = M_{2,n}, \quad \textrm{cov} [ X_i, X_{i-1} ] = M_{3,n}, \quad \textrm{cov} [ X_i, X_{i-2} ] = M_{4,n},
\end{equation}
we get estimates
\begin{equation}
\begin{aligned}
\hat{\mu} &= M_{1,n}, \\
\hat{\tau} &= \frac{1}{\Delta} \log \frac{M_{3,n}}{M_{4,n}}, \\
\hat{\sigma}^2 &= 2 \frac{1}{\Delta} \frac{M_{3,n}^2}{M_{4,n}} \log \frac{M_{3,n}}{M_{4,n}}, \\
\hat{\omega}^2 &= M_{2,n} - \frac{M_{3,n}^2}{M_{4,n}}. \\
\end{aligned}
\end{equation}
Higher moments and higher lags of autocovariance function can also be used. However, because we use this method mainly as initial estimates, we do not focus on finding the optimal set of moments. The method of moments estimator not assuming the noise and its bias in the presence of the noise is derived in Appendix \ref{app:moments}.

\subsection{Maximum Likelihood Method}
\label{sec:ouLikelihood}

A widely used method for parameter estimation is the maximum likelihood estimator. It maximizes the likelihood function (or, equivalently, the logarithmic likelihood function) given the observations. In our case, it utilizes the normal conditional density function for the Ornstein--Uhlenbeck process. In some simple cases, the maximum likelihood estimators are available in a closed form. \cite{Tang2009} present the closed-form estimates for the regularly spaced Ornstein--Uhlenbeck process without the noise.

We focus on the more general case of the irregularly spaced Ornstein--Uhlenbeck process contaminated by the noise. The maximum likelihood estimates are obtained by maximizing the logarithmic likelihood function given by
\begin{equation}
L(\mu, \tau, \sigma^2, \omega^2) = \sum_{i=1}^n \log f_{X_{i}} \left( x_{i} \mid X_{i-1} = x_{i-1} \right),
\end{equation}
where $f_{X_i} \left( x_{i} \mid X_{i-1} = x_{i-1} \right)$ is the conditional density function of the observations. According to \eqref{eq:ouObsMomentsCon} and Appendix \ref{app:conditional}, it is the conditional density function of the normal distribution
\begin{equation}
\begin{aligned}
f_{X_{i}}(x_{i} \mid X_{i-1} = x_{i-1}) &= \frac{1}{\sqrt{2 \pi \mathrm{var}[X_{i} \mid X_{i-1} = x_{i-1}]}} \\
& \quad \times \exp \left\{-\frac{\left( x_{i} - \mathrm{E}[X_{i} \mid X_{i-1} = x_{i-1}] \right)^2}{2\mathrm{var}[X_{i} \mid X_{i-1} = x_{i-1}]} \right\}
\end{aligned}
\end{equation}
with conditional moments
\begin{equation}
\begin{aligned}
\mathrm{E} [ X_{i} \mid X_{i-1} = x_{i-1} ] &= \frac{x_{i-1} \sigma^2 + 2 \tau \mu \omega^2}{\sigma^2 + 2 \tau \omega^2} e^{-\tau \left(T_i - T_{i-1} \right)} + \mu \left( 1 - e^{-\tau \left(T_i - T_{i-1} \right)} \right), \\
\mathrm{var} [ X_{i} \mid X_{i-1} = x_{i-1} ] &= \frac{\sigma^2 \omega^2}{\sigma^2 + 2 \tau \omega^2} e^{-2 \tau \left(T_i - T_{i-1} \right)} + \frac{\sigma^2}{2 \tau} \left( 1 - e^{-2 \tau \left(T_i - T_{i-1} \right)} \right) + \omega^2. \\
\end{aligned}
\end{equation}
The logarithmic likelihood function can be simplified to
\begin{equation}
\begin{aligned}
L(\mu, \tau, \sigma^2, \omega^2) &= -\frac{1}{2} \sum_{i=1}^n \log \left( 2 \pi \mathrm{var}[X_{i} \mid X_{i-1} = x_{i-1} ] \right) \\
& \quad - \frac{1}{2} \sum_{i=1}^n \frac{\left( x_t - \mathrm{E}[X_{i} \mid X_{i-1} = x_{i-1} ] \right)^2}{\mathrm{var}[X_{i} \mid X_{i-1} = x_{i-1} ]}.
\end{aligned}
\end{equation}
The estimates are then given by
\begin{equation}
(\hat{\mu}, \hat{\tau}, \hat{\sigma}^2, \hat{\omega}^2)' = \underset{\mu, \tau, \sigma^2, \omega^2}{\mathrm{arg} \max} \  L(\mu, \tau, \sigma^2, \omega^2) \quad \text{s. t.} \quad \sigma^2 \geq 0, \ \omega^2 \geq 0.
\end{equation}
In the sequel, we find the optimal solution iteratively by the Sbplx algorithm of \cite{Rowan1990}, a variant of the Nelder-Mead algorithm, and use the method of moments for an intial solution. The estimator for the irregularly spaced Ornstein--Uhlenbeck process without the noise is derived in Appendix \ref{app:likelihood}.

\subsection{ARMA Reparametrization}
\label{sec:ouArma}

The ARMA reparametrization lies in the following three steps. First, we reparametrize the discretized equidistant process to a commonly used and studied time series model. Second, we estimate parameters of the time series model, e.g.\  by the conditional-sum-of-squares or maximum likelihood estimators. Third, we transform the estimates back to the original parametrization. A possible disadvantage is that the reparametrization does not respect parameter restrictions. In our case, $\sigma^2$ and $\omega^2$ parameters should be non-negative, but the reparametrization allows for negative values.

It is well known that the discretized Ornstein--Uhlenbeck process corresponds to an AR(1) process. \cite{Ait-Sahalia2005} reparametrized the discretized Wiener process contaminated by the white noise as an ARIMA(0,1,1) process. As the discretized Wiener process without the noise is an ARIMA(0,1,0) process, the noise therefore induces a moving average component of order one. We show that the same applies for the discretized Ornstein--Uhlenbeck process contaminated by the white noise as it corresponds to an ARMA(1,1) process.

In this section, we assume the times of observations $T_i$ are equally spaced and denote $\Delta = T_i - T_{i-1} = n^{-1}$. Using \eqref{eq:ouSolution} with initial time $T_{i-1}$, the process $X_i$ can be decomposed as
\begin{equation}
\label{eq:ouArmaFirst}
\begin{aligned}
X_i &= P_{T_i} + E_{i} \\
&= \mu (1 - e^{-\tau \Delta}) + P_{T_{i-1}} e^{-\tau \Delta} + \sigma \int_{0}^{\Delta} e^{-\tau (\Delta - s)} \mathrm{d} W_s + E_i \\
&= \mu (1 - e^{-\tau \Delta}) + X_{i-1} e^{-\tau \Delta} + \sigma \int_{0}^{\Delta} e^{-\tau (\Delta - s)} \mathrm{d} W_s + E_i - E_{i-1} e^{-\tau \Delta},
\end{aligned}
\end{equation}
where the last equality holds because $P_{T_{i-1}} = X_{i-1} - E_{i-1}$. We denote
\begin{equation}
\label{eq:ouArmaReparDet}
\begin{aligned}
\alpha &= \mu (1 - e^{-\tau \Delta}), \\
\varphi &= e^{-\tau \Delta}. \\
\end{aligned}
\end{equation}
We further denote
\begin{equation}
\label{eq:ouArmaReparStoch}
U_i = \sigma \int_{0}^{\Delta} e^{-\tau (\Delta - s)} \mathrm{d} W_s + E_i - E_{i-1} e^{-\tau \Delta}.
\end{equation}
Using \eqref{eq:ouObsMomentsCon} we have that the random variable $U_i$ is normally distributed with moments
\begin{equation}
\label{eq:ouArmaReparMom}
\begin{aligned}
\textrm{E}[U_i] &= 0, \\
\textrm{var}[U_i] &= \frac{\sigma^2}{2 \tau} (1 - e^{-2 \tau \Delta})  + \omega^2 (1 + e^{-2 \tau \Delta}), \\
\textrm{cov}[U_i, U_{i-1}] &= - \omega^2 e^{-\tau \Delta}, \\
\textrm{cov}[U_i, U_{i-j}] &= 0, \qquad j > 1. \\
\end{aligned}
\end{equation}
Using substitutions \eqref{eq:ouArmaReparDet} and \eqref{eq:ouArmaReparStoch}, we rewrite \eqref{eq:ouArmaFirst} as
\begin{equation}
\label{eq:ouArmaSecond}
X_i = \alpha + \varphi X_{i-1} + U_i.
\end{equation}
Let us define a moving average process of order one $\tilde{U}_i$, $i \geq 0$ as 
\begin{equation}
\label{eq:ouArmaReparMa}
\tilde{U}_i = \theta V_{i-1} + V_i, \qquad V_i \stackrel{i.i.d.}{\sim} \textrm{N}(0, \gamma^2).
\end{equation}
Variable $\tilde{U}_i$ is then normally distributed with moments
\begin{equation}
\label{eq:ouArmaReparMaMom}
\begin{aligned}
\textrm{E}[\tilde{U}_i] &= 0, \\
\textrm{var}[\tilde{U}_i] &= \gamma^2 (1 + \theta^2), \\
\textrm{cov}[\tilde{U}_i, \tilde{U}_{i-1}] &= \theta \gamma^2. \\
\textrm{cov}[\tilde{U}_i, \tilde{U}_{i-j}] &= 0, \qquad j > 1. \\
\end{aligned}
\end{equation}
We show that the process $\{ U_i \}_{i \geq 0}$ is equivalent to the process $\{ \tilde{U}_i \}_{i \geq 0}$ for the right choice of $\gamma$ and $\theta$ parameters satisfying
\begin{equation}
\begin{aligned}
\label{eq:ouArmaUu}
\textrm{var}[U_i] &= \textrm{var}[\tilde{U}_i], \\
\textrm{cov}[U_i, U_{i-1}] &= \textrm{cov}[\tilde{U}_i, \tilde{U}_{i-1}]. \\
\end{aligned}
\end{equation}
The joint distribution of the process $\{ U_i \}_{i \geq 0}$ is identical to the joint distribution of the process $\{ \tilde{U}_i \}_{i \geq 0}$ as both processes are normally distributed with zero first moment and the same autocovariation function. We can then rewrite \eqref{eq:ouArmaSecond} as 
\begin{equation}
\label{eq:ouArmaThird}
X_i = \alpha + \varphi X_{i-1} + \tilde{U}_i.
\end{equation}
This is an ARMA(1,1) process of the form
\begin{equation}
\label{eq:ouArmaFourth}
\begin{aligned}
X_i = \alpha + \varphi X_{i-1} + \theta V_{i-1} + V_i, \qquad V_i \stackrel{i.i.d.}{\sim} \textrm{N}(0, \gamma^2).
\end{aligned}
\end{equation}
We can estimate parameters $\alpha$, $\varphi$, $\theta$ and $\gamma^2$ by any suitable method. Substitution \eqref{eq:ouArmaReparDet} and equivalency \eqref{eq:ouArmaUu} with \eqref{eq:ouArmaReparMom} and \eqref{eq:ouArmaReparMaMom} imply
\begin{equation}
\label{eq:ouArmaSystem}
\begin{aligned}
\hat{\alpha} &= \hat{\mu} (1 - e^{-\hat{\tau} \Delta}), \\
\hat{\varphi} &=  e^{-\hat{\tau} \Delta}, \\
\hat{\gamma}^2 (1 + \hat{\theta}^2) &= \frac{\hat{\sigma}^2}{2 \hat{\tau}} (1 - e^{-2 \hat{\tau} \Delta})  + \hat{\omega}^2 (1 + e^{-2 \hat{\tau} \Delta}), \\
\hat{\theta} \hat{\gamma}^2 &= - \hat{\omega}^2 e^{-\hat{\tau} \Delta}. \\
\end{aligned}
\end{equation}
Finally, by solving this system of equations, we get estimates
\begin{equation}
\begin{aligned}
\hat{\mu} &= \frac{\hat{\alpha}}{1 - \hat{\varphi}}, \\
\hat{\tau} &= - \frac{1}{\Delta} \log \hat{\varphi}, \\
\hat{\sigma}^2 &= -2 \frac{1}{\Delta} \frac{\hat{\gamma}^2 (\hat{\varphi} + \hat{\theta}^2 \hat{\varphi} + \hat{\theta} \hat{\varphi}^2 + \hat{\theta})}{\hat{\varphi} (1 -\hat{\varphi}^2)} \log \hat{\varphi}, \\
\hat{\omega}^2 &= - \frac{\hat{\theta} \hat{\gamma}^2}{\hat{\varphi}}. \\
\end{aligned}
\end{equation}
The case without the noise is presented in Appendix \ref{app:ar}.

\subsection{Simulation Study}
\label{sec:ouSimulation}

We evaluate the finite-sample performance of the proposed estimators using simulations. We simulate the observed price process as the Ornstein--Uhlenbeck process with parameters $\mu = 10^{-1}$, $\tau = 10$ and $\sigma^2 = 10^{-4}$ contaminated by the independent Gaussian white noise with variance $\omega^2=10^{-8}$. We select the values of parameteres to resemble values reported in the empirical study in Section \ref{sec:oil}. The simulated observations are irregularly spaced and the times of observations are generated by the Poisson point process. We perform the simulation \num{1 000 000} times, each with \num{23 400} observations. The number of observations corresponds to durations between price changes to be one second on average during 6.5 hours long trading day. For the simulation of the Ornstein--Uhlenbeck process, we adopt the exact simulation algorithm (see e.g.\ \citealp{Glasserrnan2004}).

We compare the estimators by mean absolute errors of estimated parameters. The noise-sensitive method of moments is denoted as 1MIN-MOM and its noise-robust modification as 1MIN-MOM-NR. The approach based on the reparametrization to time series models estimates parameters by the conditional sum-of-squares and is denoted as 1MIN-AR for the noise-sensitive reparametrization to the AR(1) process and 1-MIN-ARMA-NR for the noise-robust reparametrization to the ARMA(1,1) process. The noise-sensitive and noise-robust maximum likelihood estimators based on 1-minute data are denoted as 1MIN-MLE and 1MIN-MLE-NR respectively while their tick-data counterparts are denoted as TICK-MLE and TICK-MLE-NR respectively.

The variance of the process can also be estimated by nonparametric methods. Since the parameter $\sigma^2$ of the Ornstein--Uhlenbeck process is equal to the quadratic variation of the process over time interval $(0, 1)$, we can estimate $\sigma^2$ by nonparametric estimators of quadratic variation. The straightforward estimator of quadratic variation is the realized variance. However, as shown for example by \cite{Hansen2006}, it is biased and inconsistent in the presence of the market microstructure noise. We denote the realized variance based on 1-minute data as 1MIN-RV and TICK-RV for tick data. There are many noise-robust alternatives for the nonparametric quadratic variation estimation in the literature. One of the method is the realized kernel estimator proposed by \cite{Barndorff-Nielsen2008}. We utilize the variant with the modified Tukey-Hanning kernel and denote it as 1MIN-RK-TH2 for 1-minute data and TICK-RK-TH2 for tick data. Another noise-robust method is the pre-averaging estimator of \cite{Jacod2009}. It is denoted as 1MIN-PAE for 1-minute data and TICK-PAE for tick data. The variance of the noise $\omega^2$ is estimated using biased realized variance $RV_n$ adjusted for the noise-robust estimate $RM_n$ (either the realized kernel or the pre-averaging estimate) $\hat{\omega}^2 = (RV_n - RM_n) / 2 n$, where $n$ is the number of observations.

The results of simulations are reported in Table \ref{tab:ouSim}. Generally, the noise-robust estimators based on tick data outperform the noise-robust estimators based on 1-minute data while the noise-sensitive estimators based on tick data are outperformed by the noise-sensitive estimators based on 1-minute data. This is because the noise-robust estimators can utilize the additional information from tick data while the noise-sensitive estimators are more biased with more observations. We further investigate this property in Figure \ref{fig:ouSignature} in the empirical study. When considering only 1-minute data, the best parametric estimator is the 1MIN-ARMA-NR. However, for the volatility estimation based on 1-minute data, nonparametric estimators 1MIN-RK-TH2 and 1MIN-PAE are superior to parametric estimators. When considering both tick data and 1-minute aggregation, the best parametric estimator is the TICK-MLE-NR. The shortcoming of this estimator is slightly worse estimation of $\mu$, but it is compensated by the lowest mean absolute error of $\tau$ and $\sigma^2$ parameters. On the other hand, its noise-sensitive variant TICK-MLE performs very poorly due to the misspecification of the process (omitting the noise). Interestingly, the TICK-MLE-NR even outperforms the nonparametric TICK-RK-TH2 and TICK-PAE estimators in the estimation of the variance $\sigma^2$. In the rest of the study, we work solely with tick data and focus only on the TICK-MLE and TICK-MLE-NR estimators.

\begin{table}
\centering
\begin{tabular}{lrrrr}
\toprule
Method  &                       $\mu$ &                $\tau$ &               $\sigma$ &               $\omega$ \\
\midrule
1MIN-MOM     & $7.5797 \cdot 10^{-4}$ & $0.4709 \cdot 10^{2}$ & $0.9945 \cdot 10^{-2}$ &                      - \\
1MIN-MOM-NR  & $7.5797 \cdot 10^{-4}$ & $0.2032 \cdot 10^{2}$ & $0.4515 \cdot 10^{-2}$ & $3.6843 \cdot 10^{-5}$ \\ 
1MIN-AR      & $7.5797 \cdot 10^{-4}$ & $0.4683 \cdot 10^{2}$ & $0.9906 \cdot 10^{-2}$ &                      - \\
1MIN-ARMA-NR & $7.5797 \cdot 10^{-4}$ & $0.1358 \cdot 10^{2}$ & $0.2783 \cdot 10^{-2}$ & $2.8069 \cdot 10^{-5}$ \\
1MIN-MLE     & $7.6020 \cdot 10^{-4}$ & $0.4683 \cdot 10^{2}$ & $0.9906 \cdot 10^{-2}$ &                      - \\
TICK-MLE     & $7.8099 \cdot 10^{-4}$ & $9.0415 \cdot 10^{2}$ & $8.8591 \cdot 10^{-2}$ &                      - \\
1MIN-MLE-NR  & $7.6020 \cdot 10^{-4}$ & $0.2042 \cdot 10^{2}$ & $0.4543 \cdot 10^{-2}$ & $3.7319 \cdot 10^{-5}$ \\
TICK-MLE-NR  & $7.5910 \cdot 10^{-4}$ & $0.0543 \cdot 10^{2}$ & $0.0263 \cdot 10^{-2}$ & $0.0658 \cdot 10^{-5}$ \\
1MIN-RV      &                      - &                     - & $0.9893 \cdot 10^{-2}$ &                      - \\
TICK-RV      &                      - &                     - & $1.3831 \cdot 10^{-2}$ &                      - \\
1MIN-RK-TH2  &                      - &                     - & $0.1392 \cdot 10^{-2}$ & $0.3271 \cdot 10^{-5}$ \\
TICK-RK-TH2  &                      - &                     - & $0.0797 \cdot 10^{-2}$ & $0.1821 \cdot 10^{-5}$ \\
1MIN-PAE     &                      - &                     - & $0.0315 \cdot 10^{-2}$ & $0.0826 \cdot 10^{-5}$ \\
TICK-PAE     &                      - &                     - & $0.0322 \cdot 10^{-2}$ & $0.0836 \cdot 10^{-5}$ \\
\bottomrule
\end{tabular}
\caption{Mean absolute errors of parameters estimated by various methods from the simulated noisy Ornstein--Uhlenbeck process with true parameters $\mu=1$, $\tau = 10$, $\sigma^2 = 10^{-4}$ and $\omega^2 = 10^{-8}$.}
\label{tab:ouSim}
\end{table}

\section{Optimal Pairs Trading Strategy}
\label{sec:pairs}

For a given pair of stocks A and B, the pairs trading strategy is based on the logarithmic price spread process
\begin{equation}
P_t = \ln \left( \frac{A_t}{B_t} \right) = \ln A_t - \ln B_t,
\end{equation}
where $A_t$ is the price of stock A and $B_t$ is the price of stock B. We model the process $P_t$ as the Ornstein--Uhlenbeck process given by \eqref{eq:ouDiff} with a long-term mean $\mu$, speed of reversion $\tau$ and instantaneous volatility $\sigma > 0$. The strategy itself consists of the following steps. First, we wait until the logarithmic price spread $P_t$ reaches a given entry level $a$ at time $t_1$. Without loss of generality, we assume the entry level $a$ is greater than the long-term mean $\mu$, i.e.\  $a > \mu$. When the entry level is reached, we simultaneously enter short position in stock A and long posistion in stock B. We expect the price of A to go down and price of B to go up, i.e.\  the spread to revert to its long-term mean. When the logarithmic price spread $P_t$ reaches a given exit level $b < a$ at time $t_2$, we clear both positions and make profit. The profit from stock A in terms of continuous compound rate of return is $\ln A_{t_1} - \ln A_{t_2}$ while the profit from stock B is $\ln B_{t_2} - \ln B_{t_1}$. Adding a transaction cost $c$ for the whole pairs trade, we have the total profit
\begin{equation}
\begin{aligned}
r &= \ln A_{t_1} - \ln A_{t_2} + \ln B_{t_2} - \ln B_{t_1} - c \\
&= P_{t_1} - P_{t_2} - c \\
&= a - b - c. \\
\end{aligned}
\end{equation}
After the trade, we again wait for the spread $P_t$ to reach the entry level $a$ and repeat the whole trading cycle. The trading cycle is thus composed of two parts. In the first part, we hold short and long positions in stocks A and B respectively, while in the second part, we wait until the next trading signal. We denote the duration of the trading cycle as
\begin{equation}
\label{eq:ouDurationCycle}
\mathcal{T} = \mathcal{T}_{a \to b} + \mathcal{T}_{b \to a},
\end{equation}
where $\mathcal{T}_{a \to b}$ is the first passage time from $a$ to $b$ and $\mathcal{T}_{b \to a}$ is the first passage time from $b$ to $a$.

In this strategy, we short stock A and long stock B. The opposite strategy can be adopted as well. In that case, when reaching the entry level $a' < \mu$, we long A and short B. Then, when reaching the exit level $b' > a'$, we make profit $b' - a' - c$. Since the Ornstein--Uhlenbeck process is symmetric around $\mu$, the second strategy for stocks A and B is identical to the first strategy for stocks B and A. For simplicity, we focus only on the first case for stocks A and B with $a > \mu$.

Our goal is to determine the values of entry signal $a$ and exit signal $b$ for a given transaction cost $c$ and static process parameters $\mu$, $\tau$ and $\sigma$. To optimally select signals $a$ and $b$, we closely follow the framework of \cite{Bertram2009} and \cite{Bertram2010}, also adopted by \cite{Cummins2012}, \cite{Zeng2014}, and \cite{Goncu2016}. All these papers focus on maximazing the expected profit while \cite{Bertram2010} also deals with maximazing the Sharpe ratio. In our work, we adopt the mean-variance optimization related to the modern portfolio theory. We formulate the problem as the maximization of the expected profit for a given level of maximum variance. If the level of maximum variance is large enough, the problem simply reduces to the maximization of the expected profit.

Let $Z_t$ be the random profit of the strategy over time $t$. For a given entry signal $a$, exit signal $b$ and transaction cost $c$, it is equal to
\begin{equation}
Z_t = \left(a - b - c \right) N_t,
\end{equation}
where $N_t$ is the counting process representing the number of trades during time $t$. Because the profit per trade $a - b - c$ is always constant, the only randomness lies in the process $N_t$. Further, let us define the expected profit per unit time and variance of profit per unit time as
\begin{equation}
\begin{aligned}
Z_M &= \lim_{t \to \infty} \frac{\mathrm{E}[Z_t]}{t} = \lim_{t \to \infty} \frac{\left(a - b - c \right) \mathrm{E} N_t}{t}, \\
Z_V &= \lim_{t \to \infty} \frac{\mathrm{var}[Z_t]}{t} = \lim_{t \to \infty} \frac{\left(a - b - c \right)^2 \mathrm{var} N_t}{t}. \\
\end{aligned}
\end{equation}
As in \cite{Bertram2010}, using the results from the renewal theory for the expected value and variance (see e.g.\  \citealp{Cox1965}), we obtain
\begin{equation}
\label{eq:ouStrategyRenewal}
\begin{aligned}
Z_M &= \frac{a - b - c}{\mathrm{E} \mathcal{T}}, \\
Z_V &= \frac{\left(a - b - c \right)^2 \mathrm{var} \mathcal{T} }{\left( \mathrm{E} \mathcal{T} \right)^3}, \\
\end{aligned}
\end{equation}
where $\mathcal{T}$ is the trading cycle duration given by \eqref{eq:ouDurationCycle}. In our mean-variance optimization, we utilize these two moments per unit time.

\subsection{Dimensionless System}
\label{sec:pairsDimension}

Following \cite{Bertram2010} and \cite{Zeng2014}, we reparametrize the Ornstein--Uhlenbeck process \eqref{eq:ouDiff} to the dimensionless system. We transform the process to
\begin{equation}
\label{eq:ouRepar}
\tilde{P}_t = \sqrt{\frac{2 \tau}{\sigma^2}} \left( P_t - \mu \right),
\end{equation}
and perform the time dilation $\tilde{t} = \tau t$. Using It\^{o}'s lemma, we have
\begin{equation}
\mathrm{d} \tilde{P}_{\tilde{t}} = - \tilde{P}_{\tilde{t}} \mathrm{d} \tilde{t} + \sqrt{2} \mathrm{d} W_{\tilde{t}}.
\end{equation}
A major advantage of this reparametrization is that it does not depend on parameters $\mu$, $\tau$ and $\sigma^2$. For this reason, the subsequent analysis of first passage times and optimal signals is much more simple. The dimensionless system also allows us to study the impact of biased parameters on the pairs trading strategy. The reparametrized entry level, exit level and transaction cost are respectively
\begin{equation}
\label{eq:ouReparLev}
\begin{aligned}
\tilde{a} &= \sqrt{\frac{2 \tau}{\sigma^2}} \left( a - \mu \right), \quad & a &= \sqrt{\frac{\sigma^2}{2 \tau}} \tilde{a} + \mu, \\
\tilde{b} &= \sqrt{\frac{2 \tau}{\sigma^2}} \left( b - \mu \right), \quad & b &= \sqrt{\frac{\sigma^2}{2 \tau}} \tilde{b} + \mu, \\
\tilde{c} &= \sqrt{\frac{2 \tau}{\sigma^2}} c, \quad & c &= \sqrt{\frac{\sigma^2}{2 \tau}} \tilde{c}. \\
\end{aligned}
\end{equation}
The reparametrized duration of trading cycle is
\begin{equation}
\label{eq:ouReparDur}
\begin{aligned}
\tilde{\mathcal{T}} &= \tau \mathcal{T}, \quad & \mathcal{T} &= \frac{1}{\tau} \tilde{\mathcal{T}}. \\
\end{aligned}
\end{equation}
Finally, the reparametrized expected profit per unit time and variance of profit per unit time are respectively
\begin{equation}
\label{eq:ouReparMom}
\begin{aligned}
\tilde{Z}_M &= \sqrt{\frac{2}{\tau \sigma^2}} Z_M, \quad & Z_M &= \sqrt{\frac{\tau \sigma^2}{2}} \tilde{Z}_M, \\
\tilde{Z}_V &= \frac{2}{\sigma^2} Z_V, \quad & Z_V &= \frac{\sigma^2}{2} \tilde{Z}_V. \\
\end{aligned}
\end{equation}

\subsection{First Passage Times}
\label{sec:pairsPassage}

The key variable in expression for moments per time \eqref{eq:ouStrategyRenewal} is the duration of trading cycle. In the dimensionless system, it is equal to
\begin{equation}
\tilde{\mathcal{T}} = \tilde{\mathcal{T}}_{\tilde{a} \to \tilde{b}} + \tilde{\mathcal{T}}_{\tilde{b} \to \tilde{a}}.
\end{equation}
When assuming $\tilde{a} > 0$ and $\tilde{b} < \tilde{a}$, it is the sum of the first passage time from $\tilde{a}$ to $\tilde{b}$ and the first passage time from $\tilde{b}$ to $\tilde{a}$ defined as
\begin{equation}
\begin{aligned}
\tilde{\mathcal{T}}_{\tilde{a} \to \tilde{b}} &= \inf \left\{ t: \tilde{P}_t < \tilde{b} \mid \tilde{P}_0 = \tilde{a} \right\}, \\
\tilde{\mathcal{T}}_{\tilde{b} \to \tilde{a}} &= \inf \left\{ t: \tilde{P}_t > \tilde{a} \mid \tilde{P}_0 = \tilde{b} \right\}. \\
\end{aligned}
\end{equation}
In this section, we present the expected value and variance of the trading cycle duration. These results are based on the explicit expressions of the first-passage-time moments derived by \cite{Ricciardi1988}. We denote the gamma function as $\Gamma(\cdot)$ and digamma function as $\psi(\cdot)$.

The expected values of the first passage times from $\tilde{a}$ to $\tilde{b}$ and from $\tilde{b}$ to $\tilde{a}$ are respectively
\begin{equation}
\begin{aligned}
\mathrm{E} \tilde{\mathcal{T}}_{\tilde{a} \to \tilde{b}} &= \phi_1(-\tilde{b}) - \phi_1(-\tilde{a}), \\
\mathrm{E} \tilde{\mathcal{T}}_{\tilde{b} \to \tilde{a}} &= \phi_1(\tilde{a}) - \phi_1(\tilde{b}), \\
\end{aligned}
\end{equation}
where
\begin{equation}
\label{eq:ouPassagePhi1}
\phi_1(z) = \frac{1}{2} \sum_{k=1}^{\infty} \frac{ \left( \sqrt{2} z \right)^k }{ k! } \Gamma \left( \frac{k}{2} \right).
\end{equation}
The expected value of the trading cycle duration is then
\begin{equation}
\label{eq:ouCycleMean}
\mathrm{E} \tilde{\mathcal{T}} = \sum_{k=1}^{\infty} \frac{ \left( \sqrt{2} \tilde{a} \right)^{2k - 1} - \left( \sqrt{2} \tilde{b} \right)^{2k - 1} }{ \left(2k - 1 \right) ! }  \Gamma \left( \frac{2k - 1}{2} \right).
\end{equation}
The variances of the first passage times from $\tilde{a}$ to $\tilde{b}$ and from $\tilde{b}$ to $\tilde{a}$ are respectively
\begin{equation}
\begin{aligned}
\mathrm{var} \tilde{\mathcal{T}}_{\tilde{a} \to \tilde{b}} &= \left( \phi_1(-\tilde{b}) \right)^2 - \phi_2(-\tilde{b}) + \phi_2(-\tilde{a}) - \left( \phi_1(-\tilde{a}) \right)^2, \\
\mathrm{var} \tilde{\mathcal{T}}_{\tilde{b} \to \tilde{a}} &= \left( \phi_1(\tilde{a}) \right)^2 - \phi_2(\tilde{a}) + \phi_2(\tilde{b}) - \left( \phi_1(\tilde{b}) \right)^2, \\
\end{aligned}
\end{equation}
where $\phi_1(z)$ is given by \eqref{eq:ouPassagePhi1} and
\begin{equation}
\label{eq:ouPassagePhi2}
\phi_2(z) = \frac{1}{2} \sum_{k=1}^{\infty} \frac{ \left( \sqrt{2} z \right)^k }{ k! } \Gamma \left( \frac{k}{2} \right) \left( \psi \left( \frac{k}{2} \right) - \psi \left( 1 \right) \right).
\end{equation}
The variance of the trading cycle duration is then
\begin{equation}
\label{eq:ouCycleVar}
\mathrm{var} \tilde{\mathcal{T}} = w_1 (\tilde{a}) - w_1 (\tilde{b}) - w_2 (\tilde{a}) + w_2 (\tilde{b}),
\end{equation}
where
\begin{equation}
\begin{aligned}
w_1(z) &= \left( \frac{1}{2} \sum_{k=1}^{\infty} \frac{ \left( \sqrt{2} z \right)^k }{ k! } \Gamma \left( \frac{k}{2} \right) \right)^2 - \left(  \frac{1}{2} \sum_{k=1}^{\infty} \frac{ \left( - \sqrt{2} z \right)^k }{ k! } \Gamma \left( \frac{k}{2} \right) \right)^2, \\
w_2(z) &= \sum_{k=1}^{\infty} \frac{ \left( \sqrt{2} z \right)^{2k - 1} }{ \left(2k - 1\right) ! } \Gamma \left( \frac{2k - 1}{2} \right) \psi \left( \frac{2k - 1}{2} \right). \\
\end{aligned}
\end{equation}
By applying \eqref{eq:ouCycleMean} and \eqref{eq:ouCycleVar} to \eqref{eq:ouStrategyRenewal}, we have the explicit formula for the expected profit per unit time and variance of profit per unit time.

\subsection{Optimization Problem}
\label{sec:pairsOptimization}

We continue to operate within the dimensionless system. For a given transaction cost $\tilde{c}$ and maximum allowed variance per unit time $\tilde{\eta}$, we find the optimal entry signal $\tilde{a}$ and exit signal $\tilde{b}$ by the optimization problem
\begin{equation}
\label{eq:ouMarkowitz}
\begin{aligned}
\max_{\tilde{a}, \tilde{b}} & & \tilde{Z}_M (\tilde{a}, \tilde{b}, \tilde{c}) \\
\text{such that} & & \tilde{Z}_V (\tilde{a}, \tilde{b}, \tilde{c}) &\leq \tilde{\eta}, \\
& & \tilde{b} &\leq \tilde{a}, \\
& & \tilde{a} &\geq 0, \\
\end{aligned}
\end{equation}
where the expected profit per unit time $\tilde{Z}_M (\tilde{a}, \tilde{b}, \tilde{c})$ and variance of profit per unit time $\tilde{Z}_V (\tilde{a}, \tilde{b}, \tilde{c})$ are given by \eqref{eq:ouReparMom}. This formulation corresponds to the strategy in which we short stock A and long stock B. The formulation for the opposite positions strategy with signals $\tilde{a}' = - \tilde{a}$ and $\tilde{b}' = - \tilde{b}$ is symmetrical. In any case, it is a nonlinear constrained optimization problem which we solve by numerical methods.

Let us denote $\tilde{a}^*$ the optimal entry signal, $\tilde{b}^*$ the optimal exit signal and $\tilde{Z}_M^*$ the optimal mean profit in the dimensionless system. Our numerical results show that the optimal exit signal is $\tilde{b}^* = - \tilde{a}^*$. This is the exactly same behavior as for the optimal exit signal in the case of unrestricted maximization of the expected profit and maximization of the Sharpe ratio as shown by \cite{Bertram2010}. This also means that the waiting part of the trading cycle for the strategy allowing for both long/short and short/long positions reduces to zero as the exit level is equal to the entry level for the strategy with opposite positions, i.e.\  $\tilde{b}^* = - \tilde{a}^* = \tilde{a}^{'*}$. The optimal strategy suggests to simply switch positions from short to long for stock A and from long to short for stock B at signal $-\tilde{a}^*$ and vice versa at signal $\tilde{a}^*$.

\subsection{Impact of Biased Estimates}
\label{sec:pairsImpact}

An inherent characteristic of the pairs trading strategy is its sensitivity to almost all aspects.  In the literature, the strategy is found to be sensitive to transaction costs, speed of execution, length of the formation period, changes in model parameters over time, diversity of traded securities and news shocks. These unpleasant properties were studied for example by \cite{Bowen2010}, \cite{Do2012}, \cite{Huck2013} and \cite{Jacobs2015a}. We add to this long list the sensitivity of the intraday pairs trading strategy to the market microstructure noise. 

We investigate the impact of biased estimates of $\tau$ and $\sigma^2$. When the market microstructure noise is not taken into account during estimation, both these parameters are overestimated. As the optimization problem \eqref{eq:ouMarkowitz} itself is formulated in the dimensionless system, it is unaffected by the values of the Ornstein--Uhlenbeck process parameters. Reparametrization \eqref{eq:ouRepar} is, however, affected. This means that the inputs to the optimization problems $\tilde{c}$ and $\tilde{\eta}$ based on the values $c$ and $\eta$ in the original parametrization can be biased. According to \eqref{eq:ouReparLev}, the transaction cost $\tilde{c}$ is biased when the ratio of $\tau$ and $\sigma^2$ is biased. The maximum allowed variance $\tilde{\eta}$ is, similarly to the variance in \eqref{eq:ouReparMom}, reparametrized as $\tilde{\eta} = 2 \eta / \sigma^2$ and is therefore biased when $\sigma^2$ is biased. A bias can also occur when the resulting optimal signals $\tilde{a}$ and $\tilde{b}$ are transformed back to $a$ and $b$ in the original parametrization. According to \eqref{eq:ouReparLev}, the entry level $a$ and exit level $b$ are biased when the ratio of $\tau$ and $\sigma^2$ is biased. The optimal mean profit per unit time $Z_M$ is also biased when either $\tau$ or $\sigma^2$ is biased according to \eqref{eq:ouReparMom}. Overall, the biased estimates of $\tau$ and $\sigma^2$ have impact on the maximum variance constraint, optimal expected profit, and optimal entry and exit signals.

We illustrate the bias of the optimal expected profit when $\sigma^2$ is correctly specified but $\tau$ is considered 10 times higher than the actual value. In this case, the maximum variance constraint is unbiased. Figure \ref{fig:ouEffFrontier} shows the efficient frontier of the mean-variance model for the optimization problem based on correctly specified as well as biased parameters. We can see that the optimization problem based on incorrectly specified parameter $\tau$ overestimates the optimal mean profit. It also finds suboptimal entry and exit signals resulting in much lower actual mean profit in comparison with the optimal mean profit based on the correct parameters.

\begin{figure}
\begin{center}
\includegraphics[width=0.8\textwidth]{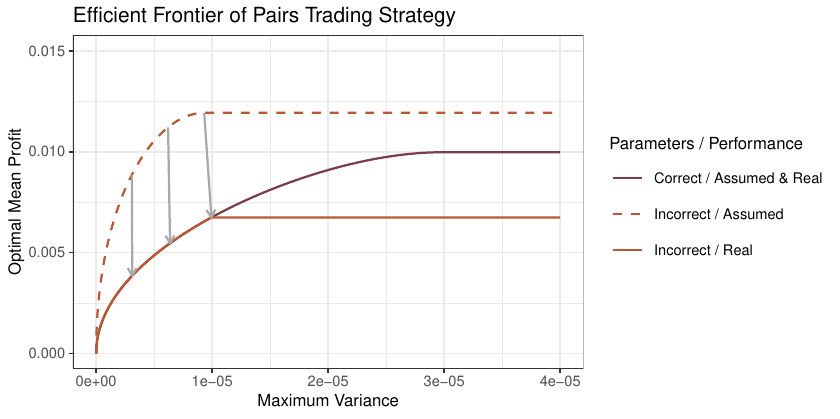}
\caption{Efficient frontier of the mean-variance model for the optimization problem based on correctly specified parameters $\mu = 1$, $\tau = 10$, $\sigma^2 = 10^{-4}$ as well as incorrect parameter $\tau = 100$.}
\label{fig:ouEffFrontier}
\end{center}
\end{figure}

\section{Application to Big Oil Companies}
\label{sec:oil}

We analyze high-frequency data of 7 Big Oil stocks traded on New York Stock Exchange (NYSE) obtained from the Daily TAQ database. Stocks of Chevron (CVX), Phillips 66 (PSX) and ExxonMobil (XOM) companies are primarily listed on NYSE while stocks of BP (BP), Eni (E), Royal Dutch Shell (RDS-A) and Total (TOT) companies are primarily listed on some other exchanges and only secondary listed on NYSE. As all 7 companies are in the same industry and they are all influenced by crude oil prices, some degree of comovement of their stock prices can be expected. The 7 considered stocks form 21 possible pairs in total. We analyze the period from January 2, 2015 to June 29, 2018 consisting of 880 trading days. The data are preprocessed using the procedure described in Appendix \ref{app:data}.

Our trading strategy utilizes results from Sections \ref{sec:ou} and \ref{sec:pairs}. First, we analyze historical intraday data. We separately estimate the parameters of the Ornstein--Uhlenbeck process for each considered pair on each considered day. Some days exhibit strong mean-reversion suggesting the Ornstein--Uhlenbeck process with high speed of reversion as illustrated in the upper plot of Figure \ref{fig:ouDay} while others exhibit random walk behaviour suggesting the Wiener process as illustrated in the lower plot of Figure \ref{fig:ouDay}. Days with high speed of reversion and high volatility offer more opportunities for profit. Second, we utilize time series models to capture time-varying nature of daily parameter values. This allows us to predict future parameter values. Third, assuming the Ornstein--Uhlenbeck process with specific parameters, we find the optimal entry and exit signals together with the expected profit and the variance of the profit for a given pair on a given day. Based on the values of the mean profit and its variance, we decide whether to trade the given pair on the given day or not. If the decision is positive, the trading is then controlled by the optimal entry and exit signals.

\begin{figure}
\begin{center}
\includegraphics[width=0.8\textwidth]{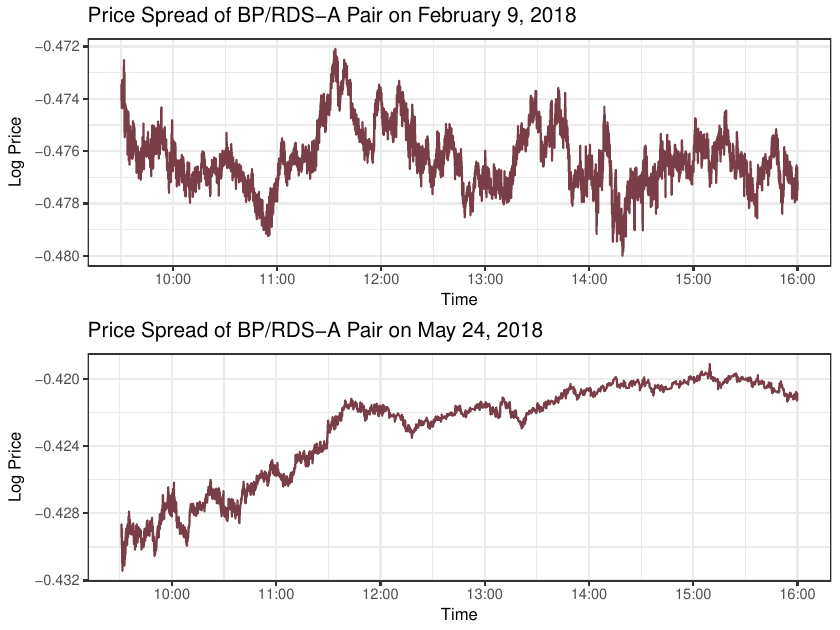}
\caption{Price spread of BP/RDS-A pair resembling Ornstein--Uhlenbeck process on February 9, 2018 and Wiener process on May 24, 2018.}
\label{fig:ouDay}
\end{center}
\end{figure}

\subsection{Estimators Performance}
\label{sec:oilEstimators}

We compare the noise-sensitive maximum likelihood estimator (TICK-MLE) with the noise-robust maximum likelihood estimator (TICK-MLE-NR) based on tick data. The first question is whether the market microstructure noise is indeed present in the observed prices. As the high-frequency data studies agree that the noise is present (see e.g.\  \citealp{Hansen2006}), we address this issue only briefly using a graphical analysis. In Figure \ref{fig:ouSignature}, we adopt the so-called volatility signature plot. The plot shows the dependence of the average estimated value of variance on the sampling interval. For tick data, the sampling interval $k$ refers to data consisting of each $k$-th observation. For example, value 1 corresponds to complete tick data while value 2 corresponds to every second observation being dropped. The number of observations for sampling interval $k$ is approximately $n/k$, where $n$ is the number of observations of complete tick data. We can see in Figure \ref{fig:ouSignature} that the variance estimated by the noise-sensitive method increases with the number of observations. This is exactly the behavior caused by the market microstructure noise. Noise-robust estimator, on the other hand, sticks around a constant value. For $k=1$, the bias of the TICK-MLE method is quite big causing very distorted image of the price volatility.

\begin{figure}
\begin{center}
\includegraphics[width=0.8\textwidth]{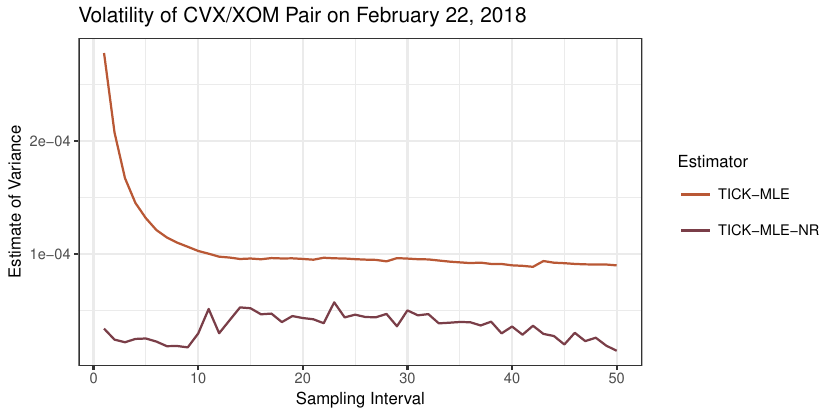}
\caption{Volatility signature plot of CVX/XOM pair on February 22, 2018.}
\label{fig:ouSignature}
\end{center}
\end{figure}

The second question about the market microstructure noise is whether the independent white noise assumption is met in practice. \cite{Hansen2006} analyze stocks traded on the NYSE and NASDAQ exchanges and find that the market microstructure noise present in prices is dependent in time and dependent on efficient prices. Using volatility signature plots, they notice decreasing volatility with increasing number of observations $n/k$, which can be explained only by the innovations in the noise process negatively correlated with the efficient returns. When we analyze stock prices, we achieve the same results. However, when we analyze spreads between pairs of stocks, the volatility estimated by the noise-sensitive method is distinctly increasing with shorter sampling interval in the vast majority of days as shown in the example in Figure \ref{fig:ouSignature}. We argue that the noise in the spread process has twice as many sources than the noise in a price process which diminishes dependency of the noise. For this reason, we consider the white noise assumption reasonable for the pair spread process, even when it is not suitable for the price process itself.

The average parameters estimated by the TICK-MLE and TICK-MLE-NR methods for each pair are reported in Table \ref{tab:ouParameters}. The estimated means $\mu$ are quite similar for the two methods while parameters $\tau$ and $\sigma$ are much higher for the TICK-MLE method. On average, the speed of reversion $\tau$ is 6.36 times higher and the standard error $\sigma$ is 2.21 higher (the variance $\sigma^2$ is 4.78 higher) when estimated by the TICK-MLE method. Note that Table \ref{tab:ouParameters} reports standard deviation $\sigma$ and not variance $\sigma^2$. Following our theory and Figure \ref{fig:ouSignature}, we argue that the estimates of $\tau$ and $\sigma$ by the TICK-MLE method are significantly biased and this estimator should be avoided. The proposed TICK-MLE-NR method, on the other hand, is not affected by the noise while utilizing all available tick data.

\begin{table}
\centering
\resizebox{\textwidth}{!}{
\begin{tabular}{lrrrrrrrr}
\toprule
& \multicolumn{3}{c}{TICK-MLE} & \multicolumn{3}{c}{TICK-MLE-NR} \\
\cmidrule(l{3pt}r{3pt}){2-4} \cmidrule(l{3pt}r{3pt}){5-7}
Pair        &  $\mu$  &  $\tau$ &               $\sigma$ &   $\mu$ &  $\tau$ &               $\sigma$ \\
\midrule
BP / CVX    & -1.0546 & 29.9279 & $1.7401 \cdot 10^{-2}$ & -1.0552 &  3.9524 & $0.6676 \cdot 10^{-2}$ \\
BP / E      &  0.1149 & 35.7787 & $1.7458 \cdot 10^{-2}$ &  0.1149 &  6.5204 & $0.9428 \cdot 10^{-2}$ \\
BP / PSX    & -0.8416 & 25.5638 & $1.9848 \cdot 10^{-2}$ & -0.8418 &  4.1090 & $0.8054 \cdot 10^{-2}$ \\
BP / RDS-A  & -0.4256 & 65.3009 & $1.7258 \cdot 10^{-2}$ & -0.4256 &  6.4025 & $0.6903 \cdot 10^{-2}$ \\
BP / TOT    & -0.3324 & 49.7040 & $1.8318 \cdot 10^{-2}$ & -0.3326 &  7.0370 & $0.8302 \cdot 10^{-2}$ \\
BP / XOM    & -0.8230 & 26.4986 & $1.5536 \cdot 10^{-2}$ & -0.8240 &  3.0105 & $0.6908 \cdot 10^{-2}$ \\
CVX / E     &  1.1697 & 18.0115 & $1.6074 \cdot 10^{-2}$ &  1.1690 &  4.4924 & $0.8804 \cdot 10^{-2}$ \\
CVX / PSX   &  0.2129 & 21.9259 & $1.8796 \cdot 10^{-2}$ &  0.2130 &  3.9646 & $0.8357 \cdot 10^{-2}$ \\
CVX / RDS-A &  0.6286 & 28.4128 & $1.7110 \cdot 10^{-2}$ &  0.6284 &  3.7525 & $0.6507 \cdot 10^{-2}$ \\
CVX / TOT   &  0.7223 & 24.6909 & $1.7381 \cdot 10^{-2}$ &  0.7233 &  4.8954 & $0.8190 \cdot 10^{-2}$ \\
CVX / XOM   &  0.2316 & 33.6227 & $1.5280 \cdot 10^{-2}$ &  0.2316 &  3.2192 & $0.5324 \cdot 10^{-2}$ \\
E / PSX     & -0.9566 & 19.8339 & $1.9473 \cdot 10^{-2}$ & -0.9565 &  5.6958 & $1.1065 \cdot 10^{-2}$ \\
E / RDS-A   & -0.5393 & 32.5781 & $1.6314 \cdot 10^{-2}$ & -0.5396 &  5.9070 & $0.8110 \cdot 10^{-2}$ \\
E / TOT     & -0.4473 & 56.8259 & $1.9806 \cdot 10^{-2}$ & -0.4472 & 10.1327 & $0.9431 \cdot 10^{-2}$ \\
E / XOM     & -0.9389 & 15.2862 & $1.3825 \cdot 10^{-2}$ & -0.9388 &  3.4012 & $0.7846 \cdot 10^{-2}$ \\
PSX / RDS-A &  0.4160 & 22.8768 & $1.9327 \cdot 10^{-2}$ &  0.4161 &  3.6241 & $0.8121 \cdot 10^{-2}$ \\
PSX / TOT   &  0.5094 & 23.1866 & $2.0295 \cdot 10^{-2}$ &  0.5096 &  5.5541 & $1.0106 \cdot 10^{-2}$ \\
PSX / XOM   &  0.0186 & 19.4030 & $1.7160 \cdot 10^{-2}$ &  0.0186 &  2.8399 & $0.7516 \cdot 10^{-2}$ \\
RDS-A / TOT &  0.0933 & 51.7991 & $1.7501 \cdot 10^{-2}$ &  0.0934 &  7.1039 & $0.6976 \cdot 10^{-2}$ \\
RDS-A / XOM & -0.3985 & 25.8139 & $1.5277 \cdot 10^{-2}$ & -0.3987 &  2.7363 & $0.5473 \cdot 10^{-2}$ \\
TOT / XOM   & -0.4907 & 21.0923 & $1.5298 \cdot 10^{-2}$ & -0.4910 &  3.5759 & $0.7001 \cdot 10^{-2}$ \\
\midrule
Average     & -0.1491 & 30.8635 & $1.7368 \cdot 10^{-2}$ & -0.1492 &  4.8537 & $0.7862 \cdot 10^{-2}$ \\
\bottomrule
\end{tabular}
}
\caption{Average values of the Ornstein--Uhlenbeck process parameters estimated by the noise-sensitive and noise-robust estimators from January 2, 2015 to June 29, 2018.}
\label{tab:ouParameters}
\end{table}

\subsection{Time-Varying Parameters}
\label{sec:oilTime}

In this section, we present the time series models used for time-varying parameters of the Ornstein--Uhlenbeck process. We assume values of parameters can change on each day $i = 1,\ldots,h$. In other words, we assume the time-varying parameters to follow piecewise constant process, in which parameters are constant during the whole day. For each parameter, we consider separate model. The main purpose of these models is to forecast future values of the parameters.

Daily mean parameter $\mu_i$ is modeled as the AR(1) process with the opening price $X_{0,i}$ on day $i$ as an exogenous variable, i.e.
\begin{equation}
\label{eq:ouTimeModelMu}
\mu_i = a + b \mu_{i-1} + c X_{0,i} + \varepsilon_i, \qquad i=1,\ldots,h,
\end{equation}
where $a$, $b$, $c$ are the coefficients and $\varepsilon_i$ is the Gaussian white noise. This is a very similar idea to the doubly mean-reverting process of \cite{Liu2017}. In their study, they consider the prices to follow two mean-reverting processes on two frequencies. The low frequency corresponds to daily opening and closing prices while the high frequency corresponds to intraday prices. In our case, the low frequency mean-reverting process is represented by the autoregressive process for the daily mean parameter.

Daily speed of reversion parameter $\tau_i$ is modeled only by the mean value, i.e.
\begin{equation}
\label{eq:ouTimeModelTau}
\tau_i = a' + \varepsilon_i', \qquad i=1,\ldots,h,
\end{equation}
where $a'$ is the coefficient and $\varepsilon_i'$ is the Gaussian white noise. The one-step-ahead forecast of $\tau_i$ is then simply the average of its past values. We resort to this static model as we find no autocorrelation structure in the empirical study.

For the daily variance parameter $\sigma^2_i$, we utilize the HAR model of \cite{Corsi2009}. They model volatility by the realized variance over different time periods. Specifically, the daily realized variance is dependent on the realized variance of the previous day, the previous week, and the previous month. In our case, the logarithm of the parameter $\sigma^2_i$ follows the autoregressive process
\begin{equation}
\label{eq:ouTimeModelSigma}
\ln \sigma_i^2 = a'' + b'' \ln \sigma_{i-1}^2 + c'' \frac{1}{5} \sum_{j=1}^{5} \ln \sigma_{i-j}^2 + d'' \frac{1}{22} \sum_{j=1}^{22} \ln \sigma_{i-j}^2 + \varepsilon_i'', \qquad i=1,\ldots,h,
\end{equation}
where $a''$, $b''$, $c''$, $d''$ are the coefficients and $\varepsilon_i''$ is the Gaussian white noise.

We train the models using a rolling window of 132 days (approximately 6 months) and perform one-step-ahead forecasts. We find that the model \eqref{eq:ouTimeModelMu} for the long-term mean parameter explains 96 \% of the variance of $\mu_i$ on average while the model \eqref{eq:ouTimeModelSigma} for the variance parameter explains 25 \% of the variance of $\sigma^2_i$ on average. By definition, the model \eqref{eq:ouTimeModelTau} for speed of reversion parameter explains exactly 0 \% of the variance of $\tau$. Overall, we find that the models for $\mu_i$ and $\sigma_i^2$ parameters are satisfactory while the parameter $\tau_i$ is very hard to predict.

\subsection{Trading Algorithm}
\label{sec:oilAlgo}

For a set of parameters of the Ornstein--Uhlenbeck process obtained by the forecasting models and a given maximum allowed variance of the profit $\eta$, we find the optimal entry and exit signals together with the maximal expected profit. As the forecasted parameter values are uncertain, we trade only if the expected profit is larger than a given threshold $\zeta$.

We use transaction costs $c = 0.0015$ per round-trip pair-trade. In the literature, this is considered as a moderate level of transaction costs. For example, \cite{Avellaneda2010}, \cite{Bertram2010} and \cite{Liu2017} use an optimistic transaction costs level of 0.0010, \cite{Bowen2010} use a moderate level of 0.0015 and \cite{Bogomolov2013} uses a conservative level of 0.0040.

We summarize the proposed pairs trading strategy with notes regarding our specific setting. First, we need to select several parameters of the strategy. The initialization of the strategy requires the following steps:
\begin{enumerate}
\item A set of potentially tradable pairs is selected. The number of pairs is denoted as $p$. In our case, we consider $p=21$ pairs created from 7 stocks.
\item The length of history $h$ is selected. In our case, we use history of $h=132$ days corresponding roughly to 6 months.
\item The maximum allowed variance $\eta$ for daily profit is selected. In our case, we consider $\eta = 10^{-5}$, $\eta = 5 \cdot 10^{-5}$, and $\eta = \infty$. The value $\eta = 5 \cdot 10^{-5}$ is found to yield the best results.
\item The minimum allowed mean $\zeta$ for daily profit is selected. In our case, we consider $\zeta \in (0, 0.7)$. The value 0.009 is found to yield the best results.
\end{enumerate}

Next, we describe our strategy for a single trading day $h+1$. The execution lies in the following steps:
\begin{enumerate}
\item Historical intraday data are analyzed using the methodology described in Section \ref{sec:ou}. For each pair $j=1,\ldots,p$ and each historical day $i=1,\ldots,h$, the Ornstein--Uhlenbeck parameters $\mu_{j,i}$, $\tau_{j,i}$ and $\sigma_{j,i}^2$ are estimated. In our case, we consider the TICK-MLE and TICK-MLE-NR estimators.
\item Time series models described in Section \ref{sec:oilTime} are utilized to capture time-varying nature of daily parameter values and predict their future values. For each pair $j=1,\ldots,p$, the models \eqref{eq:ouTimeModelMu}, \eqref{eq:ouTimeModelTau} and \eqref{eq:ouTimeModelSigma} for daily Ornstein--Uhlenbeck parameters are estimated using history $h$. Future parameter values $\mu_{j,h+1}$, $\tau_{j,h+1}$ and $\sigma_{j,h+1}^2$ are then forecasted. Prices during a single day in future are then assumed to follow the Ornstein--Uhlenbeck process with forecasted parameters.
\item The optimal strategy described in Section \ref{sec:pairs} is determined. For each pair $j=1,\ldots,p$, the optimal entry signal $a_j^*$, the optimal exit signal $b_j^*$ and the optimal mean profit $Z_{M,j}^*$ are found using \eqref{eq:ouMarkowitz}. In this model, the mean profit $Z_{M,j}$ is maximized while the variance of the profit $Z_{V,j}$ is lower than $\eta$. For the opposite pairs trade, the optimal entry signal is $a_j^{'*} = b_j^*$, the optimal exit signal is $b_j^{'*} = a_j^*$ and the optimal mean profit is $Z_{M,j}^{'*} = Z_{M,j}^*$.
\item For each pair $j=1,\ldots,p$, it is decided whether this pair will be traded on day $h+1$ or not. The pair will be traded if its optimal mean is higher than the selected threshold, i.e.\  $Z_{M,j}^* \geq \zeta$.
\item For each tradable pair $j$, intraday prices are monitored. When the price reaches the entry level $a_j^*$ or $a_j^{'*}$, the appropriate pairs trade is entered as described in Section \ref{sec:pairs}. When the price reaches the exit level $b_j^* = a_j^{'*}$ or $b_j^{'*} = a_j^{*}$, long and short positions are switched. Right before the market closes, both positions are closed regardless the price.
\end{enumerate}

Closing positions before market close helps mitigate overnight risk. To further manage potential losses and prevent significant divergences, additional risk management tools, such as stop-loss mechanisms, can be implemented.

\subsection{Strategy Performance}
\label{sec:oilStrategy}

We asses the profitability of the pairs trading strategy for the 21 pairs comprising of the 7 Big Oil companies. As we use 6 months history for the training of the forecasting models, we evaluate the strategy from the second half of the year 2015 to the second half of the year 2018.

We consider $\eta = 10^{-5}$, $\eta = 5 \cdot 10^{-5}$, and $\eta = \infty$ as levels for the maximum allowed variance. Figure \ref{fig:ouStrategyProfit} shows the total daily profit of the strategy based on 21 pairs for various values of the minimum mean profit $\zeta$. We can see that the profit is quite sensitive to the selection of thresholds $\eta$ and $\zeta$. When the expected mean is not limited, almost all pairs are traded on all days resulting in a huge loss. When the minimum mean profit $\zeta$ is set around 0.009, the strategy based on the TICK-MLE-NR estimator performs the best and achieves daily profit up to 0.0069 in terms of the continuous compound rate of return for $\eta = 5 \cdot 10^{-5}$. When we further increase the threshold for minimum mean profit $\zeta$, less trades are carried out and even the profitable trades are cut resulting in decline of the profit. Naturally, the profit converges to zero with increasing minimum mean profit $\zeta$.

\begin{figure}
\begin{center}
\includegraphics[width=0.8\textwidth]{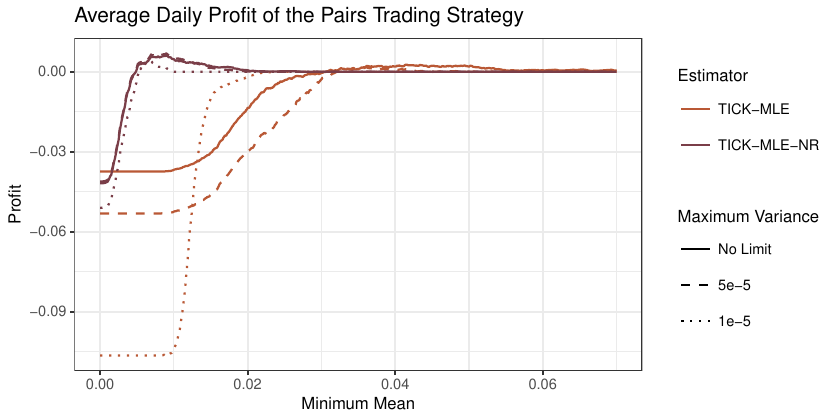}
\caption{Dependence of the daily profit on the maximum variance $\eta$ and minimum mean $\zeta$ for the noise-sensitive and noise-robust estimators from January 2, 2015 to June 29, 2018.}
\label{fig:ouStrategyProfit}
\end{center}
\end{figure}

Interestingly, the number of trades and the profit are not evenly distributed throughout the years. Most trades are executed during the years 2015, 2016 and 2018 while the year 2017 is quiet period for the strategy based on the TICK-MLE-NR estimator. We attribute this to the lower volatility of the spread prices during 2017.

Table \ref{tab:ouStrategyProfit} reports, for each pair separately, the average daily number of trades, the percentage of days with zero or positive profit, the average daily profit, and the daily Sharpe ratio (assuming a zero risk-free rate). Generally, pairs with higher estimated values of $\tau$ and $\sigma^2$ are traded more as their expected profit is also higher. We focus on the TICK-MLE-NR estimator with the most profitable setting of the maximum variance of the profit $5 \cdot 10^{-5}$ and the minimum mean profit $0.009$. Table \ref{tab:ouStrategyProfit} indicates that E/PSX, E/TOT and PSX/TOT are the most traded pairs while E/PSX and E/TOT are also the most profitable pairs. Table \ref{tab:ouParameters} shows that these pairs have the above average estimated values of $\tau$ and $\sigma^2$. Overall, the strategy based on the TICK-MLE-NR estimator has a daily Sharpe ratio of 0.12 and an annualized Sharpe ratio of 1.97, which is a decent value.

\begin{table}
\centering
\resizebox{\textwidth}{!}{
\begin{tabular}{lrrrrrrrr}
\toprule
& \multicolumn{4}{c}{TICK-MLE} & \multicolumn{4}{c}{TICK-MLE-NR} \\
\cmidrule(l{3pt}r{3pt}){2-5} \cmidrule(l{3pt}r{3pt}){6-9}
Pair       & Trades & Hit \% & Profit & Sharpe & Trades & Hit \% & Profit & Sharpe \\
\midrule
  BP / CVX & 5.59 & 42.56 & -0.0032 & -0.20 & 0.14 & 99.04 & 0.0000 & 0.00 \\ 
  BP / E & 5.89 & 46.97 & 0.0009 & 0.06 & 0.87 & 94.21 & 0.0011 & 0.13 \\ 
  BP / PSX & 5.99 & 44.08 & -0.0044 & -0.24 & 0.25 & 97.80 & -0.0000 & -0.00 \\ 
  BP / RDSA & 4.57 & 36.83 & -0.0024 & -0.23 & 0.03 & 99.44 & -0.0001 & -0.04 \\ 
  BP / TOT & 5.24 & 40.91 & -0.0020 & -0.17 & 0.25 & 97.66 & 0.0002 & 0.06 \\ 
  BP / XOM & 4.83 & 38.84 & -0.0043 & -0.29 & 0.01 & 99.86 & -0.0001 & -0.04 \\ 
  CVX / E & 5.93 & 46.42 & -0.0009 & -0.05 & 0.76 & 95.87 & 0.0008 & 0.06 \\ 
  CVX / PSX & 6.05 & 39.94 & -0.0053 & -0.27 & 0.34 & 95.45 & -0.0008 & -0.09 \\ 
  CVX / RDSA & 5.25 & 41.88 & -0.0035 & -0.23 & 0.12 & 99.44 & 0.0001 & 0.03 \\ 
  CVX / TOT & 6.35 & 45.04 & -0.0021 & -0.12 & 0.65 & 95.87 & 0.0006 & 0.06 \\ 
  CVX / XOM & 4.61 & 38.29 & -0.0048 & -0.34 & 0.03 & 100.00 & 0.0001 & 0.04 \\ 
  E / PSX & 6.79 & 45.18 & -0.0016 & -0.07 & 2.21 & 83.75 & 0.0015 & 0.08 \\ 
  E / RDSA & 5.44 & 45.24 & 0.0001 & 0.01 & 0.71 & 96.08 & 0.0012 & 0.12 \\ 
  E / TOT & 6.28 & 48.48 & 0.0019 & 0.12 & 1.52 & 88.71 & 0.0018 & 0.16 \\ 
  E / XOM & 4.89 & 48.07 & -0.0001 & -0.01 & 0.12 & 98.62 & -0.0000 & -0.00 \\ 
  PSX / RDSA & 5.84 & 42.58 & -0.0048 & -0.26 & 0.25 & 98.46 & 0.0001 & 0.02 \\ 
  PSX / TOT & 6.42 & 45.04 & -0.0037 & -0.19 & 1.39 & 87.33 & -0.0000 & -0.00 \\ 
  PSX / XOM & 5.47 & 40.50 & -0.0052 & -0.28 & 0.10 & 99.31 & 0.0000 & 0.01 \\ 
  RDSA / TOT & 5.18 & 44.54 & -0.0013 & -0.11 & 0.22 & 97.90 & 0.0001 & 0.03 \\ 
  RDSA / XOM & 4.66 & 42.44 & -0.0041 & -0.29 & 0.00 & 100.00 & 0.0000 & 0.00 \\ 
  TOT / XOM & 5.06 & 43.11 & -0.0025 & -0.15 & 0.02 & 99.86 & 0.0001 & 0.04 \\ 
\midrule
Total & 115.84 & 33.75 & -0.0530 & -0.36 & 9.96 & 83.88 & 0.0069 & 0.12 \\ 
\bottomrule
\end{tabular}
}
\caption{Average daily number of trades, percentage of days with zero or positive profit, average daily profit, and daily Sharpe ratio, with $\eta = 5\cdot10^{-5}$ and $\zeta = 0.009$ for the noise-sensitive and noise-robust estimators from January 2, 2015 to June 29, 2018.}
\label{tab:ouStrategyProfit}
\end{table}

Finally, we compare the TICK-MLE and TICK-MLE-NR estimators. Figure \ref{fig:ouStrategyProfit} illustrates that both estimators have quite different ideas of the mean profit and its variance. As shown in Section \ref{sec:pairsImpact}, the values of the moments are quite distorted when the parameter estimates are biased as they are in the case of the TICK-MLE estimator. More important, even when selecting the best thresholds for the minimum mean profit $\zeta$ and the maximum variance of the profit $\eta$ for each method separately, the TICK-MLE-NR estimator significantly outperforms the TICK-MLE estimator. This is because the optimization based on the TICK-MLE estimator finds suboptimal values of entry and exit signals. The TICK-MLE-NR estimator, on the other hand, finds optimal values leading to a much greater profit. This finding is the key result of our pairs trading application.

In Appendix \ref{app:new}, we apply our strategy to a supplementary empirical sample from March 1, 2022, to December 29, 2023. This period follows Russia's invasion of Ukraine in late February 2022, which shocked the oil markets. It is found that our strategy is not profitable during this period of market dislocations. However, the empirical differences between the TICK-MLE and TICK-MLE-NR estimators are still present. The performance of the strategy may thus change over time, but properly addressing market microstructure noise remains relevant.

\section{Conclusion}
\label{sec:conclusion}

Our paper has two main contributions:

\begin{enumerate}
\item We propose three different estimators of the Ornstein--Uhlenbeck process which directly take the market microstructure noise into account. For initial estimates, we recommend the closed-form method of moments. For regularly spaced observations, we introduce an approach based on the reparametrization of the process to the ARMA(1,1) process and subsequent estimation by the maximum likelihood or conditional sum-of-squares methods. For irregularly spaced observations, we suggest a method based on the maximum likelihood. We show in the simulation study as well as in the empirical study that the proposed noise-robust estimators outperform the traditional estimators ignoring the noise. The behavior of the estimators is consistent with the high-frequency literature dealing with the market microstructure noise.
\item We propose a novel pairs trading strategy that finds the entry and exit signals using mean-variance optimization. Adding a suitable risk constraint to the optimization problem reduces losses due to uncertainty of the out-of-sample Ornstein--Uhlenbeck parameters. In the empirical study, we show that this strategy can be viable when the market microstructure noise is taken into account and the proposed estimators are used. The proposed mean-variance optimization is further studied by \cite{Holy2022c}.
\end{enumerate}

Although we restrict ourselves to the pairs trading strategy in the second part of the paper, our findings that the traditional estimators are biased when the market microstructure noise is present are general. Our proposed noise-robust estimators of the Ornstein--Uhlenbeck can be used in various financial applications in which time series exhibit mean-reverting behavior such as modeling of currency exchange rates, commodity prices, interest rates, and stochastic volatility of financial assets.

Possible directions for future research include modeling jumps in the price process and further analyzing parameter uncertainty in the optimization of trading signals.

\section*{Acknowledgements}
\label{sec:acknow}

We would like to thank Michal \v{C}ern\'{y}, Tom\'{a}\v{s} Cipra, and Alena Hol\'{a} for their comments. We would also like to thank participants of the 3rd Conference and Workshop on Statistical Methods in Finance, Chennai, December 16--19, 2017, the 20th Winter Workshop ROBUST, Rybn\'{i}k, January 21--26, 2018 and the 30th European Conference on Operational Research, Dublin, June 23--26, 2019 for fruitful discussions.

\section*{Funding}
\label{sec:fund}

The work on this paper was supported by the Internal Grant Agency of the Prague University of Economics and Business under project F4/63/2016 and the Czech Science Foundation under project 25-18028S.

\appendix

\section{Conditional Distribution with the Noise}
\label{app:conditional}

We show that the conditional probability distribution of the Ornstein--Uhlenbeck process contaminated by the Gaussian white noise is the normal distribution with moments given by \eqref{eq:ouObsMomentsCon}. For this purpose, we utilize the following proposition with $P = P_0$, $\mu_P = \mu$, $\sigma^2_P = \sigma^2 / (2 \tau)$, $E = E_0$, $\mu_E = 0$, $\sigma^2_E = \omega^2$ and $X = X_0$.

\begin{proposition}
\label{th:normalCondNoisy}
Let $P \sim \text{N}(\mu_P, \sigma^2_P)$, $E \sim \text{N}(\mu_E, \sigma^2_E)$ and $P \perp E$. Let $X=P+E$. The conditional probability density function is then
\begin{equation}
f_{P}(p \mid X=x) = \frac{1}{\sqrt{2\pi \sigma^2_C(x)}} \exp \left\{ - \frac{\left(p-\mu_C(x)\right)^2}{2\sigma^2_C(x)} \right\},
\end{equation}
where
\begin{equation}
\begin{aligned}
\mu_C(x) &= \frac{\mu_P\sigma^2_E-\mu_E\sigma^2_P+x\sigma^2_P}{\sigma^2_P+\sigma^2_E}, \\
\sigma^2_C(x) &= \frac{\sigma^2_P\sigma^2_E}{\sigma^2_P+\sigma^2_E}.
\end{aligned}
\end{equation}
\end{proposition}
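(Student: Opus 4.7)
The plan is to exploit the fact that $(P, X)$ is jointly Gaussian. Since $X = P + E$ is an affine combination of the independent Gaussian variables $P$ and $E$, the vector $(P, X)'$ is a bivariate normal random vector, and the conditional distribution of one component given the other is automatically Gaussian with a standard closed form. This reduces the proof to a short moment computation followed by algebraic simplification.

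First I would record the joint first and second moments. From linearity, $\mathrm{E}[X] = \mu_P + \mu_E$ and, using independence, $\mathrm{var}[X] = \sigma_P^2 + \sigma_E^2$ and $\mathrm{cov}[P,X] = \mathrm{cov}[P, P+E] = \mathrm{var}[P] = \sigma_P^2$. Since $(P,X)'$ is jointly normal, one may invoke the textbook formulas
\begin{equation*}
\mathrm{E}[P\mid X=x] = \mathrm{E}[P] + \frac{\mathrm{cov}[P,X]}{\mathrm{var}[X]}\bigl(x - \mathrm{E}[X]\bigr), \qquad \mathrm{var}[P\mid X=x] = \mathrm{var}[P] - \frac{\mathrm{cov}[P,X]^2}{\mathrm{var}[X]},
\end{equation*}
and substitute the values above. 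The variance formula immediately yields $\sigma_C^2(x) = \sigma_P^2\sigma_E^2/(\sigma_P^2+\sigma_E^2)$, and the mean formula simplifies to $\mu_C(x) = (\mu_P\sigma_E^2 - \mu_E\sigma_P^2 + x\sigma_P^2)/(\sigma_P^2+\sigma_E^2)$ after combining over the common denominator.

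As a cross-check, or if one prefers a self-contained derivation, I would instead apply Bayes' rule directly: write $f_P(p\mid X=x) \propto f_{X\mid P}(x\mid p)\,f_P(p)$, where $f_{X\mid P}(x\mid p)$ is the density of $N(p+\mu_E,\sigma_E^2)$ and $f_P(p)$ is the density of $N(\mu_P,\sigma_P^2)$. Multiplying the two exponentials gives a quadratic in $p$; completing the square in $p$ exhibits the result as the density of a Gaussian with the claimed mean and variance, and the normalizing constant is pinned down by the Gaussian shape (no further integration needed).

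The only real obstacle is bookkeeping during the algebraic simplification of $\mu_C(x)$, where the term $\mu_P(\sigma_P^2+\sigma_E^2) - \mu_P\sigma_P^2 = \mu_P\sigma_E^2$ must appear to match the stated form; otherwise the argument is entirely mechanical, and the jointly Gaussian observation does all the conceptual work.
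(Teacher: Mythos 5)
Your argument is correct, but it takes a different route from the paper. You observe that $(P,X)'$ is bivariate normal (because every linear combination of $P$ and $X$ is a linear combination of the independent Gaussians $P$ and $E$) and then invoke the standard conditional-distribution formulas $\mathrm{E}[P\mid X=x]=\mathrm{E}[P]+\mathrm{cov}[P,X]\,\mathrm{var}[X]^{-1}(x-\mathrm{E}[X])$ and $\mathrm{var}[P\mid X=x]=\mathrm{var}[P]-\mathrm{cov}[P,X]^2\,\mathrm{var}[X]^{-1}$; the moment bookkeeping you carry out ($\mathrm{cov}[P,X]=\sigma_P^2$, $\mathrm{var}[X]=\sigma_P^2+\sigma_E^2$) does yield exactly the stated $\mu_C(x)$ and $\sigma_C^2(x)$. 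The paper instead works from first principles: it writes down the joint density $g_{P,X}(p,x)$ as the product of the density of $P$ and the density of $E$ evaluated at $x-p$, integrates out $p$ via the Gaussian integral identity $\int e^{-ap^2+bp+c}\,\mathrm{d}p=\sqrt{\pi/a}\,e^{b^2/(4a)+c}$ to obtain the marginal $h_X(x)$ explicitly, and then forms the quotient $g_{P,X}(p,x)/h_X(x)$. Your primary route is shorter and more conceptual but leans on the (unproved here) bivariate-normal conditioning formula; the paper's computation is longer but entirely self-contained, which is presumably why the authors chose it for an appendix. Your suggested Bayes-rule cross-check with completion of the square is essentially the paper's derivation minus the explicit evaluation of the normalizing constant, so either version of your write-up would be acceptable.
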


\begin{proof}
The joint probability density function of $P$ and $X$ is given by
\begin{equation}
\begin{aligned}
g_{P,X}(p,x) &= \frac{1}{\sqrt{2\pi\sigma^2_P}}\exp \left\{ -\frac{(p-\mu_P)^2}{2\sigma^2_P} \right\} \frac{1}{\sqrt{2\pi\sigma^2_E}}\exp \left\{ -\frac{(x-p-\mu_E)^2}{2\sigma^2_E} \right\} \\
&= \frac{1}{\sqrt{2\pi\sigma^2_P}\sqrt{2\pi\sigma^2_E}}\exp \left\{ -\frac{\sigma^2_P+\sigma^2_E}{2\sigma^2_P\sigma^2_E}p^2+ \frac{\mu_P\sigma^2_E+x\sigma^2_P-\mu_E\sigma^2_P}{\sigma^2_P\sigma^2_E}p \right. \\
&\quad \left. + \frac{2x\mu_E\sigma^2_P-\mu_P^2\sigma^2_E- x^2\sigma^2_P -\mu^2_E\sigma^2_P}{2\sigma^2_P\sigma^2_E} \right\}.
\end{aligned}
\end{equation}
Using the property of Gaussian function integral
\begin{equation}
\int_{-\infty}^\infty \exp \left\{ -ap^2+bp+c \right\}\mathrm{d}p = \sqrt{\frac{\pi}{a}}\exp \left\{ \frac{b^2}{4a}+c \right\},
\end{equation}
we get the marginal probability density function
\begin{equation}
\begin{aligned}
h_{X}(x) &= \int_{-\infty}^\infty g_{P,X}(p,x) \mathrm{d}p \\
&= \frac{1}{\sqrt{2\pi\sigma^2_P}\sqrt{2\pi\sigma^2_E}}\sqrt{\frac{\pi}{\frac{\sigma^2_P+\sigma^2_E}{2\sigma^2_P\sigma^2_E}}} \\
& \quad \times \exp \left\{ \frac{\left(\frac{\mu_P\sigma^2_E+x\sigma^2_P-\mu_E\sigma^2_P}{\sigma^2_P\sigma^2_E}\right)^2}{4\left( \frac{\sigma^2_P+\sigma^2_E}{2\sigma^2_P\sigma^2_E} \right)} + \frac{2x\mu_E\sigma^2_P-\mu_P^2\sigma^2_E- x^2\sigma^2_P -\mu^2_E\sigma^2_P}{2\sigma^2_P\sigma^2_E} \right\} \\
&= \frac{1}{\sqrt{2\pi \left( \sigma^2_P + \sigma^2_E \right)}} \exp \left\{ - \frac{(\mu_P-x+\mu_E)^2}{2(\sigma^2_P+\sigma^2_E)} \right\}.
\end{aligned}
\end{equation}
The conditional probability density function is then derived as
\begin{equation}
\begin{aligned}
f_{P}(p \mid X=x) &= \frac{g_{P,X}(p,x)}{h_{X}(x)} \\
&= \frac{1}{\sqrt{2\pi \frac{\sigma^2_P\sigma^2_E}{\sigma^2_P+\sigma^2_E}}} \exp \left\{ - \frac{(p-\mu_P)^2}{2\sigma^2_P} - \frac{(x-p-\mu_E)^2}{2\sigma^2_E}  + \frac{(\mu_P-x+\mu_E)^2}{2(\sigma^2_P+\sigma^2_E)} \right\} \\
&= \frac{1}{\sqrt{2\pi \sigma^2_C(x)}} \exp \left\{ - \frac{\left(p-\mu_C(x)\right)^2}{2\sigma^2_C(x)} \right\}.
\end{aligned}
\end{equation}
\end{proof}

\section{Method of Moments Without the Noise}
\label{app:moments}

We derive the traditional method of moments for the case of the equidistantly sampled Ornstein--Uhlenbeck process with no noise. As we need to estimate parameters $\mu$, $\tau$ and $\sigma$, we utilize three unconditional moments $\textrm{E} [ P_{T_i} ]$, $\textrm{var} [ P_{T_i} ]$, and $\textrm{cov} [ P_{T_i}, P_{T_{i-1}} ]$ presented in \eqref{eq:ouMomentsUnc}. We can estimate these moments using observed values $p_{T_0}, p_{T_1}, \ldots, p_{T_n}$ as
\begin{equation}
\begin{aligned}
M_{1,n} &= \frac{1}{n+1} \sum_{i=0}^n p_{T_i}, \\
M_{2,n} &= \frac{1}{n} \sum_{i=0}^n (p_{T_i} - M_{1,n})^2, \\
M_{3,n} &= \frac{1}{n-1} \sum_{i=1}^n (p_{T_i} - M_{1,n})(p_{T_{i-1}} - M_{1,n}), \\
\end{aligned}
\end{equation}
By solving equations
\begin{equation}
\textrm{E} [ P_{T_i} ] = M_{1,n}, \quad \textrm{var} [ P_{T_i} ] = M_{2,n}, \quad \textrm{cov} [ P_{T_i}, P_{T_{i-1}} ] = M_{3,n},
\end{equation}
we get estimates
\begin{equation}
\begin{aligned}
\hat{\mu} &= M_{1,n}, \\
\hat{\tau} &= n \log \frac{M_{2,n}}{M_{3,n}}, \\
\hat{\sigma}^2 &= 2 n M_{2,n} \log \frac{M_{2,n}}{M_{3,n}}. \\
\end{aligned}
\end{equation}

We illustrate the bias of the method of moments when the Ornstein--Uhlenbeck process is contaminated by the white noise with standard deviation $\omega$. Parameter $\mu$ can be consistently estimated by sample mean. For the other two parameters, the situation is more difficult. Parameter $\tau$ can be estimated using equation
\begin{equation}
\label{eq:ouMomTauEst}
\begin{aligned}
\tau_{P,n}  &= n \log \frac{\mathrm{var} [ P_{T_{i-1}} ]}{\mathrm{cov} [ P_{T_{i}}, X_{T_{i-1}} ]} \\
&= - n \log \mathrm{cor} [ P_{T_i}, P_{T_{i-1}} ].
\end{aligned}
\end{equation}
The method of moments replaces the theoretical correlation in this equation by the sample correlation to estimate $\tau$. However, if the actual process follows $X_i$, the equality \eqref{eq:ouMomTauEst} does not hold and instead we have
\begin{equation}
\label{eq:ouMomTauBias}
\begin{aligned}
\tau_{X,n} &= n \log \frac{\mathrm{var} [ X_{i-1} ]}{\mathrm{cov} [ X_i, X_{i-1} ]} \\
&= - n \log \mathrm{cor} [ X_i, X_{i-1} ] \\
&= - n \log \left( \frac{\sigma^2}{\sigma^2 + 2\tau \omega^2} e^{-\tau (T_{i} - T_{i-1})} \right) \\
&= \tau_{P,n} - n \log \frac{\sigma^2}{\sigma^2 + 2 \tau \omega^2}.
\end{aligned}
\end{equation}
The estimate $\tau_{X,n}$ is a function of the number of observations, which for $n \to \infty$ linearly diverges to infinity. Similarly, parameter $\sigma$ can be estimated using equation
\begin{equation}
\label{eq:ouMomSigmaEst}
\begin{aligned}
\sigma^2_{P,n} &= 2 n \mathrm{var} [ P_{T_i} ] \log \frac{\mathrm{var} [ P_{T_{i-1}} ]}{\mathrm{cov} [ P_{T_{i}}, X_{T_{i-1}} ]} \\
&= - 2 n \mathrm{var} [ P_{T_i} ] \log \mathrm{cor} [ P_{T_i}, P_{T_{i-1}} ].
\end{aligned}
\end{equation}
When the process is noisy, we have
\begin{equation}
\label{eq:ouMomSigmaBias}
\begin{aligned}
\sigma^2_{X,n} &= 2 n \mathrm{var} [ X_i ] \log \frac{\mathrm{var} [ X_{i-1} ]}{\mathrm{cov} [ X_i, X_{i-1} ]} \\
&= - 2 n \mathrm{var} [ X_i ] \log \mathrm{cor} [X_i, X_{i-1} ] \\
&= - 2 n \left( \frac{\sigma^2}{2\tau} + \omega^2 \right) \log \left( \frac{\sigma^2}{\sigma^2 + 2\tau \omega^2} e^{-\tau (T_{i} - T_{i-1})} \right) \\
&= \sigma^2_{P,n} + 2 \tau \omega^2 - 2 n \left( \frac{\sigma^2}{2 \tau} + \omega^2 \right) \log \frac{\sigma^2}{\sigma^2 + 2 \tau \omega^2},
\end{aligned}
\end{equation}
which also linearly diverges to infinity for $n \to \infty$. We show the bias of $\tau_{X,n}$ and $\sigma^2_{X,n}$ in Figure \ref{fig:ouMomBias}.

\begin{figure}
\begin{center}
\includegraphics[width=0.8\textwidth]{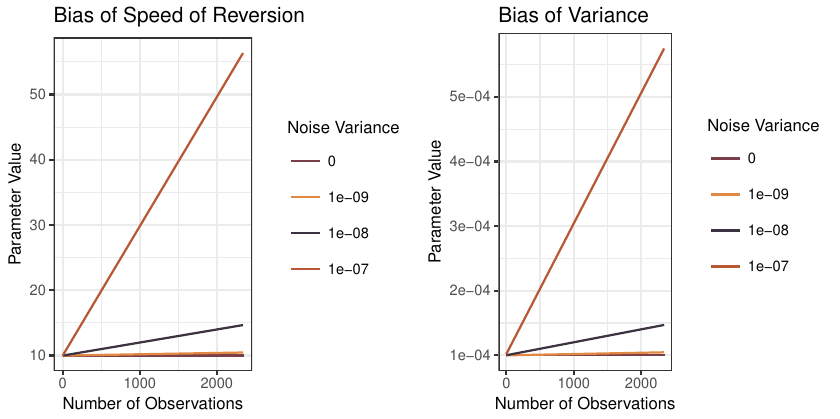}
\caption{The bias of functions $\tau_{X,n}$ and $\sigma^2_{X,n}$ with parameters $\mu=1$, $\tau=10$, $\sigma^2=10^{-4}$ and various values of $\omega^2$.}
\label{fig:ouMomBias}
\end{center}
\end{figure}

\section{Maximum Likelihood Method Without the Noise}
\label{app:likelihood}

In the case of the Ornstein--Uhlenbeck process without the noise, the maximum likelihood estimates are obtained by maximizing the logarithmic likelihood function given by
\begin{equation}
L(\mu, \tau, \sigma^2) = \sum_{i=1}^n \log f_{P_{T_i}} \left( p_{T_i} \mid P_{T_{i-1}} = p_{T_{i-1}} \right),
\end{equation}
where $f_{P_{T_i}} \left( p_{T_i} \mid P_{T_{i-1}} = p_{T_{i-1}} \right)$ is the conditional density function of the observations. According to equation \eqref{eq:ouMomentsCon}, it is the conditional density function of the normal distribution
\begin{equation}
\begin{aligned}
f_{P_{T_i}} \left( p_{T_i} \mid P_{T_{i-1}} = p_{T_{i-1}} \right) &= \frac{1}{\sqrt{2 \pi \mathrm{var}[P_{T_i} \mid P_{T_{i-1}} = p_{T_{i-1}}]}} \\
& \quad \times \exp \left\{-\frac{\left( p_{T_i} - \mathrm{E}[P_{T_i} \mid P_{T_{i-1}} = p_{T_{i-1}}] \right)^2}{2\mathrm{var}[P_{T_i} \mid P_{T_{i-1}} = p_{T_{i-1}}]} \right\},
\end{aligned}
\end{equation}
with conditional moments
\begin{equation}
\begin{aligned}
\mathrm{E} [ P_{T_i} \mid P_{T_{i-1}} = p_{T_{i-1}} ] &= p_{T_{i-1}} e^{-\tau \left(T_i - T_{i-1} \right) } + \mu \left( 1 - e^{-\tau \left(T_i - T_{i-1} \right) } \right) , \\
\mathrm{var} [ P_{T_i} \mid P_{T_{i-1}} = p_{T_{i-1}} ] &= \frac{\sigma^2}{2 \tau} \left( 1 - e^{-2 \tau \left(T_i - T_{i-1} \right) } \right). \\
\end{aligned}
\end{equation}
The logarithmic likelihood function can be simplified to
\begin{equation}
\begin{aligned}
L(\mu, \tau, \sigma^2) &= -\frac{1}{2} \sum_{i=1}^n \log \left( 2 \pi \mathrm{var}[P_{T_i} \mid P_{T_{i-1}} = p_{T_{i-1}} ] \right) \\
& \quad - \frac{1}{2} \sum_{i=1}^n \frac{\left( p_{T_i} - \mathrm{E}[P_{T_i} \mid P_{T_{i-1}} = p_{T_{i-1}} ] \right)^2}{\mathrm{var}[P_{T_i} \mid P_{T_{i-1}} = p_{T_{i-1}} ]}.
\end{aligned}
\end{equation}
The estimates are then given by
\begin{equation}
(\hat{\mu}, \hat{\tau}, \hat{\sigma}^2)' = \underset{\mu, \tau, \sigma^2}{\mathrm{arg} \max} \  L(\mu, \tau, \sigma^2) \quad \text{s. t.} \quad \sigma^2 \geq 0.
\end{equation}

\section{AR Reparametrization Without the Noise}
\label{app:ar}

When the noise is not present, the discrete process $P_{T_i}$ can be reparametrized as an AR(1) process. Using \eqref{eq:ouSolution}, the process $P_{T_i}$ can be rewritten as
\begin{equation}
\label{eq:ouArFirst}
P_{T_i} = P_{T_{i-1}} e^{-\tau \Delta} + \mu (1 - e^{-\tau \Delta}) + \sigma \int_{T_{i-1}}^{T_i} e^{-\tau (\Delta - s)} \mathrm{d} W_s.
\end{equation}
We denote
\begin{equation}
\label{eq:ouArReparDet}
\begin{aligned}
\alpha &= \mu (1 - e^{-\tau \Delta}), \\
\varphi &= e^{-\tau \Delta}. \\
\end{aligned}
\end{equation}
We further denote
\begin{equation}
\label{eq:ouArReparStoch}
V_i = \sigma \int_{T_{i-1}}^{T_i} e^{-\tau (\Delta - s)} \mathrm{d} W_s. \\
\end{equation}
From equation \eqref{eq:ouMomentsCon} we have that the random variable $V_i$ is normally distributed with variance
\begin{equation}
\label{eq:ouArReparMom}
\gamma^2 = \textrm{var} [V_i] = \frac{\sigma^2}{2 \tau} \left( 1 - e^{-2 \tau \Delta} \right). \\
\end{equation}
The random variable $V_i$ is independent from $P_{T_{i-1}}$. Using \eqref{eq:ouArReparDet} and \eqref{eq:ouArReparMom}, we can reparametrize the process \eqref{eq:ouArFirst} as an AR(1) process
\begin{equation}
\label{eq:ouArSecond}
P_{T_i} = \alpha + \varphi P_{T_{i-1}} + V_i, \qquad V_i \stackrel{i.i.d.}{\sim} \textrm{N}(0, \gamma^2).
\end{equation}
We can estimate parameters $\alpha$, $\varphi$ and $\gamma^2$ by any suitable method. Finally, by solving equations
\begin{equation}
\label{eq:ouArSystem}
\begin{aligned}
\hat{\alpha} &= \hat{\mu} (1 - e^{-\hat{\tau} \Delta}), \\
\hat{\varphi} &=  e^{-\hat{\tau} \Delta}, \\
\hat{\gamma}^2 &= \frac{\hat{\sigma}^2}{2 \hat{\tau}} \left( 1 - e^{-2 \hat{\tau} \Delta} \right), \\
\end{aligned}
\end{equation}
we get estimates
\begin{equation}
\begin{aligned}
\hat{\mu} &= \frac{\hat{\alpha}}{1 - \hat{\varphi}}, \\
\hat{\tau} &= - \frac{1}{\Delta} \log \hat{\varphi}, \\
\hat{\sigma}^2 &= -2 \frac{1}{\Delta} \frac{\hat{\gamma}^2 }{1 -\hat{\varphi}^2} \log \hat{\varphi}. \\
\end{aligned}
\end{equation}

\section{Data Preprocessing}
\label{app:data}

Careful data cleaning is one of the most important aspects of high-frequency data analysis. We utilize the standard data cleaning procedure for NYSE TAQ database of \cite{Barndorff-Nielsen2009} with some slight modifications. Our procedure consists of the following steps. 

\begin{enumerate}
\item \emph{Retain entries originating from a single exchange. Delete other entries.} This step corresponds to P3 rule of \cite{Barndorff-Nielsen2009}. 
\item \emph{Delete all trades with a timestamp outside the window when the exchange is open.} The normal trading hours of the NYSE exchanges are from 9:30 am to 4:00 pm in the eastern time zone. This step corresponds to P1 rule of \cite{Barndorff-Nielsen2009}.
\item \emph{Delete entries with corrected trades.} For the NYSE TAQ database, corrected trades are denoted by the correction indicator 'CORR' other than 0. This step removes trades that were corrected, changed, or signified as cancel or error and corresponds to T1 rule of \cite{Barndorff-Nielsen2009}.
\item \emph{Delete entries with abnormal trades.} For the NYSE TAQ database, abnormal trades are denoted by the sale condition 'COND' having a letter code, except for 'E', 'F' and 'I'. This step rules out data points that the NYSE TAQ database is flagging up as a problem and corresponds to T2 rule of \cite{Barndorff-Nielsen2009}.
\item \emph{Delete entries which are identified as preferred or warrants.} For the NYSE TAQ database, all trades with the non-empty SUFFIX indicator should be deleted.
\item \emph{Merge entries with the same timestamp.} Merging itself is done using the median price. This step corresponds to T3 rule of \cite{Barndorff-Nielsen2009}. Merging simultaneous entries is quite common in the literature yet controversial as it leads to the largest deletion of data. \cite{Barndorff-Nielsen2009} argue that this onrule seems inevitable. However, there are few recent studies omitting this rule such as \cite{Liu2018} who estimate integrated variance by the pre-averaging estimator using data with multiple observations at the same time and \cite{Blasques2024a} who directly model zero durations in the zero-inflated autoregressive conditional duration model. Nevertheless, in our study, we resort to merging simultaneous entries for simplicity.
\item \emph{Delete entries with the price equal to zero.} This step removes obvious errors in the dataset and corresponds to P2 rule of \cite{Barndorff-Nielsen2009}.
\item \emph{Delete entries for which the price deviated by more than 10 mean absolute deviations from a rolling centred median of 50 observations.} The observation under consideration is excluded in the rolling centered median. This step corresponds to Q4 rule of \cite{Barndorff-Nielsen2009}.
\end{enumerate}

After data cleaning, the parameters of the Ornstein--Uhlenbeck process are estimated. During the estimation, we face the following issue concerning with distribution assumptions. We assume the Ornstein--Uhlenbeck process based on the normal distribution. This is quite restrictive assumption as financial data often exhibit heavy tails and the presence of jumps. Although somewhat rare, large jumps can cause problems for the estimators based on the maximum likelihood. A large jump over short period of time is not consistent with the assumed volatility process which is proportional to the time period and the maximum likelihood estimator attributes this jump to the noise component. This results in zero variance of the Ornstein--Uhlenbeck process $\sigma^2$ and overestimation of the noise variance $\omega^2$. To avoid such problems, we consider large jumps to be outliers and remove them from data for the estimation purposes. We remove 1 \% of all observations with the lowest log likelihood at initial parameter values. In the subsequent analysis, removed observations are again included.

\section{Supplementary Empirical Sample}
\label{app:new}

The analysis in Section \ref{sec:oil} is based on data from January 2, 2015, to June 29, 2018, consisting of 880 trading days. In this appendix, we perform the analysis on a newer data sample from March 1, 2022, to December 29, 2023, consisting of 462 trading days. This is a period of turmoil, as oil prices and stock volatility spiked after Russia invaded Ukraine in late February 2022.

For this period, we use a different data source -- Refinitiv Eikon. The data are in a slightly different format; notably, the timestamps are recorded with only millisecond precision. Consequently, merging entries with the same timestamp, as described in Appendix \ref{app:data}, reduces the dataset to a greater extent.

The differences between the TICK-MLE and TICK-MLE-NR estimators are less pronounced but still notable. On average, the speed of reversion $\tau$ is 2.37 times higher, and the standard error $\sigma$ is 1.62 times higher (with the variance $\sigma^2$ being 2.56 times higher) when estimated using the TICK-MLE method. This is caused by the aggregation of data over time, as the timestamps are recorded only with millisecond precision, which reduces the impact of market microstructure noise.

Figure \ref{fig:ouStrategyProfitNew} shows the profitability of the strategy during this period. We can see that neither the TICK-MLE nor the TICK-MLE-NR estimator leads to profits for any combination of minimum mean and maximum variance levels. Higher levels of the minimum mean tend to perform better, as fewer trades are made. The best option would be not to trade at all, given the level of transaction costs ($c = 0.0015$ per round-trip pair trade). When at least a small value for the mean is required, the TICK-MLE-NR estimator results in smaller losses compared to the TICK-MLE estimator, which overestimates both the variance and the speed of reversion of the Ornstein–Uhlenbeck process. In this way, the behavior is similar to that in Figure \ref{fig:ouStrategyProfit}, where some profits are achieved.

\begin{figure}
\begin{center}
\includegraphics[width=0.8\textwidth]{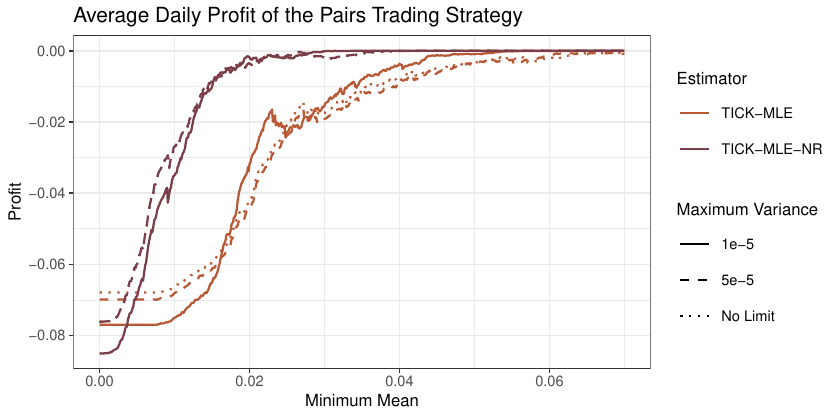}
\caption{Dependence of the daily profit on the maximum variance $\eta$ and minimum mean $\zeta$ for the noise-sensitive and noise-robust estimators from March 1, 2022 to December 29, 2023.}
\label{fig:ouStrategyProfitNew}
\end{center}
\end{figure}

%\bibliography{library.bib}
%\bibliographystyle{myjss}

\end{document}